\newtheorem{theorem}{Theorem}[section]
\newtheorem{lemma}[theorem]{Lemma}
\newtheorem{claim}[theorem]{Claim}
\newtheorem{prop}[theorem]{Proposition}
\newtheorem{corol}[theorem]{Corollary}
\DeclareMathOperator{\conv}{CH}
\DeclareMathOperator{\UH}{UH}
\DeclareMathOperator{\LH}{LH}
\DeclareMathOperator{\lev}{lev}
\newcommand{\eqdef}{:=}
\DeclareMathOperator{\EX}{\mathbf{Exp}}
\newcommand\A{{\mathcal A}}
\newcommand\T{{\mathcal T}}
\newcommand\etal{\textit{et~al.}}
\newcommand\R{{\mathbb R}}
\newcommand\tk{\tilde{k}}
\title{Convex Hull of Points Lying on Lines in {$o(n \log{n})$} Time after Preprocessing\thanks
  {A preliminary version appeared as
  E. Ezra and W. Mulzer.  \emph{Convex Hull of Imprecise Points in $o(n log n)$ 
  Time after Preprocessing} in Proc.~27th SoCG, pp.~11--20, 2011}}
\author{Esther Ezra\thanks{%
    Supported by a PSC-CUNY Research Award.}
  \and
Wolfgang Mulzer\thanks{%
Supported in part by NSF grant CCF-0634958, 
    NSF CCF 083279, and a Wallace Memorial Fellowship in Engineering.}
}
\institution{
    Courant Institute \\
    New York University\\ 
    New York, USA\\
    \texttt{esther@cims.nyu.edu}
 \and 
   Institut f\"ur Informatik\\
 Freie Universit\"at Berlin\\
 Berlin, Germany\\
\texttt{mulzer@inf.fu-berlin.de}
 }
\begin{document}

\maketitle

\begin{abstract}
  Motivated by the desire to cope with \emph{data imprecision}~\cite{Loeffler09}, 
  we study methods for taking advantage of
  preliminary information about point sets in order to speed
  up the computation of certain structures associated with them.

  In particular, we study the following problem: 
  given a set $L$ of $n$
  lines in the plane, we wish to preprocess $L$ such that later, upon receiving
  a set $P$ of $n$ points, each of which lies on a distinct line of $L$, we can
  construct the convex hull of $P$ efficiently. We show that in quadratic time
  and space it is possible to construct a data structure on $L$ that enables us
  to compute the convex hull of any such point set $P$ in $O(n \alpha(n) \log^*
  n)$ expected time. If we further assume that the points are ``oblivious''
  with respect to the data structure, the running time improves to $O(n
  \alpha(n))$. 
  The same result holds when $L$ is a set of line segments (in general position).
  We present several
  extensions, including a trade-off between space and query time and an
  output-sensitive algorithm. We also study the ``dual problem'' where we 
  show how to efficiently compute the $(\leq k)$-level of $n$ lines in the 
  plane, each of which is incident to a distinct point (given in advance).

  We complement our results by $\Omega(n \log{n})$ lower bounds under the
  algebraic computation tree model for several related problems, including 
  sorting a set of points (according to, say, their $x$-order), each of which 
  lies on a given line known in advance. Therefore, the convex hull problem 
  under our setting is \emph{easier} than sorting, contrary to the 
  ``standard'' convex hull and sorting problems, in which the two problems 
  require $\Theta(n \log{n})$ steps in the worst case (under the algebraic 
  computation tree model).

\end{abstract}

 \begin{keywords}
data imprecision,
  convex hull,
  planar arrangements,
  geometric data structures,
  randomized constructions
\end{keywords}

\section{Introduction}
Most studies in computational geometry rely on an unspoken assumption:
whenever we are given a set of input points, their precise 
locations are available to us. 
Nowadays, however, the input is often obtained via sensors from
the real world, and hence it comes with an inherent imprecision. 
Accordingly, an increasing effort is being devoted to achieving a better 
understanding of data imprecision and to developing tools to cope with it 
(see, e.g.,~\cite{Loeffler09} and the references therein). 
The notion of imprecise data can be formalized in numerous 
ways~\cite{Loeffler09,GuSalesinSt89,LofflerPhillips09}.
We consider a particular setting that has recently attracted considerable 
attention~\cite{BuchinLoMoMu11,Devillers11,HeldMi08,vKreveldLoMi10,LoefflerMu11,LoefflerSn10}.
We are given a set of planar regions, each of which represents an 
estimate about an input point, and the exact coordinates of the points 
arrive some time later and need to be processed quickly.
This situation could occur, e.g., during a two-phase measuring process: 
first the sensors quickly obtain a rough estimate of the data, 
and then they invest considerably more time to find the precise locations.
This raises the necessity to preprocess the
preliminary (imprecise) locations of the points, and store them in an 
appropriate data structure, so that when the exact measurements of the 
points arrive 
we can efficiently compute a pre-specified structure on them. 
In settings of this kind, we assume that for each input point 
its corresponding region is known (note that by this assumption we also 
avoid a point-location overhead). In light of the applications,
this is a reasonable assumption, and it can be implemented by, e.g.,
encoding this information in the ordering of $P$.

\subsection*{Related work}

\noindent\textbf{Data imprecision.}
Previous work has mainly focused on computing a triangulation for the 
input points. 
Held and Mitchell~\cite{HeldMi08} were the first to consider this framework,
and they obtained optimal bounds for preprocessing disjoint unit disks for
point set triangulations, a result that was later generalized
by van Kreveld~\etal~\cite{vKreveldLoMi10} to arbitrary disjoint polygonal 
regions. For \emph{Delaunay} triangulations, L{\"o}ffler and 
Snoeyink~\cite{LoefflerSn10} obtained an optimal result for disjoint unit 
disks (see also~\cite{Devillers11, LoefflerMu11}), which was later 
simplified and generalized by 
Buchin~\etal~\cite{BuchinLoMoMu11} to \emph{fat}\footnote{
  A planar region $o$ is said to be fat if there exist two concentric disks,
  $D \subseteq o \subseteq D'$, such that the ratio between the radii of 
  $D'$ and $D$ is bounded by some constant.}
and possibly intersecting regions.
If $n$ is the number of input regions,
the preprocessing phase typically takes $O(n \log n)$ time and 
yields a linear size data structure; the time to find the structure
on the exact points 
is usually linear or depends on the complexity (and the fatness)
of the input regions.

Since the convex hull can be easily extracted from the Delaunay 
triangulation in linear time, the same bounds carry over. However,
once the regions are not necessarily fat, the techniques 
in~\cite{BuchinLoMoMu11, LoefflerSn10} do not yield the aforementioned 
bounds anymore. In particular, if the regions consist of lines or 
line segments, one cannot hope (under certain computational models) to 
construct the Delaunay triangulation of $P$ in time $o(n \log n)$, 
regardless of preprocessing (see~\cite{DjidjevLi95} and 
Section~\ref{sec:lowerbounds}). 
Nevertheless, if we are less ambitious 
and just wish to compute  the \emph{convex hull} of $P$, we can achieve 
better performance, as our main result shows.

\noindent\textbf{Convex hull.}
Computing the convex hull of a planar $n$-point set is perhaps the most
fundamental problem in computational geometry, and there are many
algorithms available~\cite{deberg2008cga,PS-85}.
All these algorithms require $\Theta(n \log n)$ steps,
which is optimal in the algebraic computation tree model~\cite{BenOr83}.
However, there are numerous ways to exploit additional information
to improve this bound. For example, if the points are sorted along any 
fixed direction, Graham's scan takes only linear time~\cite{deberg2008cga}. 
If we know that
there are only $h$ points on the hull, the running time reduces to 
$O(n\log h)$~\cite{KirkpatrickSe86, AfshaniBaCh09}. 
If the points constitute the vertices of a given polygonal chain, 
the complexity again reduces to linear~\cite{McCallumA79}.
Our work shows another setting in which additional information
can be used to circumvent the theoretic lower bound.

Another somewhat related problem (albeit conceptually different)
is the \emph{kinetic convex hull} problem, where we are given $n$ 
points which move \emph{continuously} in the plane, and the goal 
is to maintain their convex hull over time. Kinetic data structures have been 
introduced by Basch~\etal~\cite{BGH-99} and received considerable attention 
in follow-up studies (see, e.g.,~\cite{AD-07} and the references therein).
When the trajectories of the points are lines, our problem can 
be interpreted as a (perhaps, extended and intricate) variant of the 
kinetic convex hull problem. 
Indeed, if the goal is to preprocess the linear trajectories such that 
the convex hull can be reported efficiently at any given time $t$,
our algorithm applies (in which case the exact set of points $P$ consists 
of their
positions at time $t$) and yields a relatively simple solution.
Nevertheless, our problem is more intricate than the kinetic convex hull
problem for linear trajectories, as in our scenario there is no continuous
motion that enables us to have a better control on the exact set of points
(once they arrive).

\noindent\textbf{Our results.}
We show that under a mild assumption (see Section~\ref{sec:oblivious}) 
we can preprocess the input lines $L$ such that 
given any set $P$ of points, each of which lies on a distinct line of $L$,
the convex hull $\conv(P)$ can be computed in expected time $O(n\alpha(n))$, 
where
$\alpha(\cdot)$ is the (slowly growing) inverse Ackermann 
function~\cite[Chapter~2.1]{SA-95}; the expected running time
is $O(n\alpha(n)\log^{*}n)$ without this assumption.
Our data structure has quadratic preprocessing time and storage,
and the convex hull algorithm is based on a batched randomized 
incremental construction similar to Seidel's tracing technique~\cite{Seidel91}.
As part of the construction, we repeatedly trace the \emph{zone} 
of (the boundary of) an intermediate hull in the \emph{arrangement} 
of the input lines (see below for the definitions).
The fact that the complexity of the zone is only 
$O(n\alpha(n))$~\cite{BernEpPlYa91,SA-95}, and that it can be computed in 
the same asymptotic time bound 
(after having the arrangement at hand), is a key property of our solution.
The analysis also applies 
when $L$ is a set of line segments, 
and yields the same result.

We also show that the analogous 
problem in which we just wish to sort the points according
to their $x$-order imposes algebraic computation trees of depth 
$\Omega(n \log{n})$. Hence, in our 
setting convex hull computation is strictly easier than sorting, contrary
to the ``standard'' (unconstrained) model, in which both problems
are equivalent in terms of hardness (see, e.g.,~\cite{deberg2008cga}). 
Our results can be extended with similar bounds to several related problems, 
such as determining the width and diameter of $P$, as well as time-space 
trade-offs and designing an output-sensitive algorithm. 
Unfortunately, already for the closest pair problem a 
preprocessing of the regions is unlikely to decrease the query time to 
$o(n \log{n})$, demonstrating once again the delicate nature of 
our setting.

In Section~\ref{sec:levels} we study 
a generalization of the problem under the dual setting.
Specifically, we wish to preprocess a planar $n$-point set $P$ 
such that given an integer $k$ and a set $L$ of lines, each of 
which is incident to a distinct point of $P$, we can find the 
``$(\leq k)$-level'' in the arrangement of $L$ efficiently. 
We show a randomized construction whose expected running time is 
$O(n\alpha(n) + nk)$ under a mild assumption, and 
$O(n\alpha(n)\log^{*}n + nk)$ without this assumption. 
As above, our data structure has quadratic preprocessing time and storage. 
This improves over the $O(n \log n + nk)$ time algorithms in the traditional 
model~\cite{Chan00,EverettRovKr96}, as long as $k = o(\log n)$.
Our approach is a non-trivial extension of the technique presented 
in Section~\ref{sec:quadratic_DS}, incorporated with the algorithms of 
Chan~\cite{Chan00} and Everett~\etal~\cite{EverettRovKr96}, as well as 
the Clarkson-Shor technique~\cite{CS-89}.

The quadratic preprocessing time and storage might seem disappointing.
However, a related lower bound by Ali Abam and de Berg~\cite{AD-07} from
the study of kinetic convex hulls (albeit providing a weaker evidence)
suggests that quadratic space might be necessary, and that only relatively weak 
time-space trade-offs (as in Section~\ref{sec:extensions}) are possible in this
model (see the discussion in Section~\ref{sec:extensions} for further details). 
Given the hardness of related problems, and the fact that previous approaches 
fail for ``thin'' regions, it still seems remarkable that improved bounds are 
even possible.

\section{Convex Hulls}
\label{sec:quadratic_DS}

\noindent\textbf{Preliminaries.}
The input at the preprocessing stage is a set $L$ of $n$ lines in the plane. 
A \emph{query} to the resulting data structure consists of any point set $P$ 
such that each point lies on a distinct line in $L$,
and for every point we are given its corresponding line.
For simplicity, and without loss of generality, we assume
that both $L$ and $P$ are in \emph{general position} 
(see, e.g.,~\cite{deberg2008cga,SA-95}).
We denote by $\conv(P)$ the convex hull of $P$, and by
$E(P)$ the edges of $\conv(P)$. 
We represent the vertices of $\conv(P)$ in clockwise order, and 
we direct each edge $e \in E(P)$ such that $\conv(P)$ lies to its right.
Given a subset $Q \subset P$, a point $p \in P \setminus Q$, 
and an edge $e \in E(Q)$, we say that $e$ is in \emph{conflict} with $p$
if $p$ lies to the left of the line supported by $e$.
The set of all points in $P \setminus Q$ in conflict with
$e$ is called the \emph{conflict list} $C_e$ of $e$, and 
its cardinality is called the \emph{conflict size} $c_e$ of $e$.

In what follows we denote the \emph{arrangement} of $L$ by $\A(L)$,
defined as the decomposition of the plane into \emph{vertices}, 
\emph{edges} and \emph{faces} 
(also called \emph{cells}), each being a maximal connected set contained in the 
intersection of at most two lines of $L$ and not meeting any other line.
The \emph{complexity} of a face $f$ in $\A(L)$ is the number of edges incident 
to
$f$. The \emph{zone} of a curve $\gamma$ consists of all faces that intersect
$\gamma$, and the complexity of the zone is the sum of their complexities.

\subsection{The Construction}
\label{sec:quickCH}
 
\noindent\textbf{Preprocessing.} 
We construct in $O(n^2)$ time (and storage) the \emph{arrangement} $\A(L)$ of
$L$, and produce its \emph{vertical decomposition}, that is, we erect an upward
and a downward vertical ray through each vertex $v$ of $\A(L)$ until they meet
some line of $L$ (not defining $v$), or else extend to infinity.  

\noindent\textbf{Queries.} 
Given an exact point set $P = \{p_1,\ldots, p_n\}$ as 
described above, we obtain $\conv(P)$ through a \emph{batched} 
randomized incremental construction. Let 
$P_1 \subseteq P_2 \subseteq \cdots \subseteq P_{\log^* n} = P$ be
a sequence of subsets, where $P_{k-1}$ is a random sample of $P_{k}$ of size 
$z_{k-1} \eqdef \min\{ \lfloor{n/\log^{(k-1)} n}\rfloor, n\}$,
for $k = 2,\ldots, \log^* n$.\footnote{Here, $\log^{(i)} n$ is the $i$th 
  iterated logarithm: $\log^{(0)} n = n$ and 
  $\log^{(k)} n = \log(\log^{(k-1)} n)$. The standard notation $\log^* n$ is the
  smallest $k$ such that $\log^{(k)} n \leq 1$.}
This sequence of subsets is called a \emph{gradation}.
The idea is to construct $\conv(P_1)$, $\conv(P_2)$,
$\ldots$, $\conv(P_{\log^* n})$ one by one, as follows.
First, we have $|P_1| = O(n/\log{n})$, so it takes $O(n)$ time to
find $\conv(P_1)$, using, e.g., Graham's scan~\cite{deberg2008cga}.
Then, for $k = 2,\ldots, \log^* n$, 
we incrementally construct $\conv(P_k)$ by updating $\conv(P_{k-1})$.
This basic technique was introduced by Seidel~\cite{Seidel91} 
and it has later been exploited by 
several others~\cite{Devillers92,Ramos99,ChazelleMu11}.

To construct $\conv(P_k)$ from $\conv(P_{k-1})$, we use
the data structure from the preprocessing to 
quickly construct the conflict lists of the edges in $E(P_{k-1})$ 
with respect to $P_k$.  In the standard Clarkson-Shor randomized incremental 
construction~\cite{CS-89} it takes $O(n \log n)$ time to maintain
the conflict lists.  However, once we have the 
arrangement $\A(L)$ at hand, this can be done significantly faster.

In fact, we use a refinement of the conflict lists:
we shoot an upward vertical ray from each point on the upper 
hull of $P_{k-1}$, and a downward vertical ray from each point on the
lower hull. Furthermore, we erect vertical walls through 
the leftmost and the rightmost points of $\conv(P_{k-1})$. 
This partitions the complement of $\conv(P_{k-1})$ into
vertical slabs $S(e)$, for each edge $e \in E(P_{k-1})$, and two
boundary slabs $S(v_l)$, $S(v_r)$,
associated with the respective leftmost and rightmost vertices
$v_l$ and $v_r$ of $\conv(P_{k-1})$.
The \emph{refined conflict list} of $e$, $C^{*}_e$, is defined as
$C^{*}_e \eqdef (P_k \setminus P_{k-1}) \cap S(e)$. 
We add to this collection the sets 
$C^{*}_{v_l} \eqdef (P_k \setminus P_{k-1}) \cap S(v_l)$ and
$C^{*}_{v_r} \eqdef (P_k \setminus P_{k-1}) \cap S(v_r)$, which we
call the refined conflict lists of $v_l$ and $v_r$, respectively. 
Note that $C^{*}_e \subseteq C_e$, for every $e \in E(P_{k-1})$.
Moreover, $C^{*}_{v_l}$ (resp., $C^{*}_{v_r}$) is contained in 
$C_{e_1} \cup C_{e_2}$,
where $e_1, e_2 \in E(P_{k-1})$ are the two respective edges 
emanating from $v_l$
(resp., $v_r$);
see Figure~\ref{fig:visible_edge}(a).
We now state a key property of the conflict lists $C_e$
(this property is fairly standard and follows from
related studies~\cite{ChazelleMu11, CS-89, Ramos99}):

\begin{lemma}
  \label{lem:conf_lists}
  Let $Q$ be a planar $m$-point set, $r$ a positive integer satisfying 
  $1 \le r \le m$, and $R \subseteq Q$ a random subset of size $r$. 
  Suppose that $f(\cdot)$ is a monotone non-decreasing function,
  so that $f(x)/x^c$ is decreasing, for some constant $c > 0$.
  Then
  \[ 
  \EX\Bigl[\sum_{e \in E(R)} f(c_e)\Bigr] = 
  O\Bigl( r \cdot f\bigl(m/r\bigr)\Bigr),
  \] 
  where the constant of proportionality depends on $c$, 
  and $c_e$ is the number of points $p \in Q \setminus R$ in conflict 
  with $e \in E(R)$.\qed
\end{lemma}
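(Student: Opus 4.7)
The plan is to derive this as an instance of the standard Clarkson--Shor moment bound~\cite{CS-89}. The key observation is that convex-hull edges form a configuration space of combinatorial dimension two: each candidate edge $e$ is defined by the pair of points on it, its conflict list $C_e$ consists of the points of $Q$ lying strictly to the left of the line supporting $e$, and $e$ appears in $E(R)$ iff both of its defining points lie in $R$ and none of its conflicting points do.

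First, I would record the trivial uniform bound $|E(R)| \leq |R| = r$, which immediately yields $\EX[|E(R)|] \leq r$. Next, I would invoke the classical two-stage sampling argument of Clarkson and Shor to establish the tail estimate
\[
\EX\bigl[|\{e \in E(R) : c_e > t \cdot (m/r)\}|\bigr]
   \;=\; O\bigl(r \cdot 2^{-\Omega(t)}\bigr),
   \qquad t \geq 1.
\]
The intuition is that an edge with many conflicts has almost no chance of surviving in $E(R)$, since one of its conflicting points is very likely to be sampled into $R$.

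Finally, I would convert the tail bound into the claimed moment bound by a standard summation. Set $\tau = m/r$. Edges with $c_e \leq \tau$ contribute at most $f(\tau) \cdot \EX[|E(R)|] = O(r\,f(\tau))$ to the expected sum, using monotonicity of $f$. For the remaining edges, grouped dyadically by $c_e \in (2^k \tau, 2^{k+1}\tau]$ for $k \geq 0$, the hypothesis that $f(x)/x^c$ is non-increasing gives $f(2^{k+1}\tau) \leq 2^{c(k+1)} f(\tau)$, while the tail bound gives expected count $O(r \cdot 2^{-\Omega(2^k)})$. Thus this contribution is
\[
O\bigl(r\,f(\tau)\bigr) \cdot \sum_{k \geq 0} 2^{c(k+1)} \cdot 2^{-\Omega(2^k)},
\]
and the series converges because doubly-exponential decay dominates any polynomial growth in $k$. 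Adding the two contributions yields the desired $O(r\,f(m/r))$ bound.

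The main obstacle is the careful bookkeeping in the two-stage sampling step so that the exponential decay rate in the tail bound is independent of $f$ (in particular, independent of the exponent $c$); this is precisely what allows the assumed polynomial growth condition on $f$ to be absorbed by the decay. Beyond that, everything reduces to applying well-established Clarkson--Shor machinery, and no essential new idea is required.
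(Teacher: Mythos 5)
Your proposal is correct, and it is exactly the route the paper intends: the paper states Lemma~\ref{lem:conf_lists} without proof, citing it as a standard consequence of the Clarkson--Shor technique~\cite{CS-89,ChazelleMu11,Ramos99}, and your argument (dimension-two configuration space of hull edges, the exponential tail bound on heavy configurations via two-stage sampling, then dyadic summation absorbing the polynomial growth of $f$) is the standard derivation being referenced.
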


In other words, the above lemma implies that, on average, the size of the 
conflict list of a fixed edge $e \in E(R)$ is $m/r$ (this can easily be 
seen by setting $f(\cdot)$ to the identity function, and obtaining an overall linear size).  
 
\noindent\textbf{Constructing the refined conflict lists.}
We next present how to construct the refined conflict lists
at the $k$-th round of the algorithm.
We first construct, in a preprocessing step, the refined conflict lists
$C^{*}_{v_l}$, $C^{*}_{v_r}$ in overall $O(z_{k})$ time.
We call these points the \emph{extreme} points,
and for the sake of the analysis, we eliminate these points from $P_{k}$
for the time being, and continue processing them only at the final 
step of the construction---see below.

Let $\UH(P_{k-1})$ be the upper hull of $P_{k-1}$, 
and let $\LH(P_{k-1})$ be its lower hull.
Having these structures at hand, we construct the zones of 
$\UH(P_{k-1})$ and $\LH(P_{k-1})$ in $\A(L)$.
This takes overall $O(n\alpha(n))$ time, using 
the vertical decomposition of $\A(L)$ 
and the fact that the zone complexity of a convex curve in a 
planar arrangement of $n$ lines is $O(n \alpha(n))$; 
see Bern~\etal~\cite{BernEpPlYa91} and 
Sharir and Agarwal~\cite[Theorem~5.11]{SA-95}.

As soon as we have the zones as above,
we can determine for each line $\ell \in L$ the  
edges $e \in E(P_{k-1})$ that $\ell$ intersects (if any). 
Let $L_1$ be the lines that intersect $\conv(P_{k-1})$,
and put $L_2 \eqdef L \setminus L_1$.
(At this stage of the analysis, we ignore all lines corresponding to 
points in $P_k$ that were eliminated at the time we processed 
the extreme points.) 

Next, we wish to find, for each point $p \in P_k \setminus P_{k-1}$
the edges in $E(P_{k-1})$ in conflict with $p$.
If $p$ lies inside $\conv(P_{k-1})$, 
there are no conflicts.
Otherwise, we efficiently find an edge $e_p \in E(P_{k-1})$ 
visible from $p$, whence we search for 
the slab $S(e_p^{*})$ containing $p$---see below.

Let us first consider the points on the lines in $L_1$. 
Fix a line $\ell \in L_1$, let $p \in P$ be the point on $\ell$,
and let $q_1$, $q_2$ be the intersections between $\ell$ and the
boundary of $\conv(P_{k-1})$.
The points $q_1$, $q_2$ subdivide $\ell$ into two
rays $\rho_1$, $\rho_2$, 
and the line segment $\overline{q_1 q_2}$.
By convexity, 
$\overline{q_1 q_2} \subseteq \conv(P_{k-1})$ 
and the rays $\rho_1$, $\rho_2$ 
lie outside $\conv(P_{k-1})$.
Hence, if $p$ lies on $\overline{q_1 q_2}$, it must 
be contained in $\conv(P_k)$.
Otherwise, $p$ sees an edge of $E(P_{k-1})$ that meets one of the 
rays $\rho_1$, $\rho_2$, and we thus set $e_p$ to be this edge 
(which can be determined in constant time); 
see Figure~\ref{fig:visible_edge}(b).
\begin{figure*}
  \begin{center}
    \includegraphics{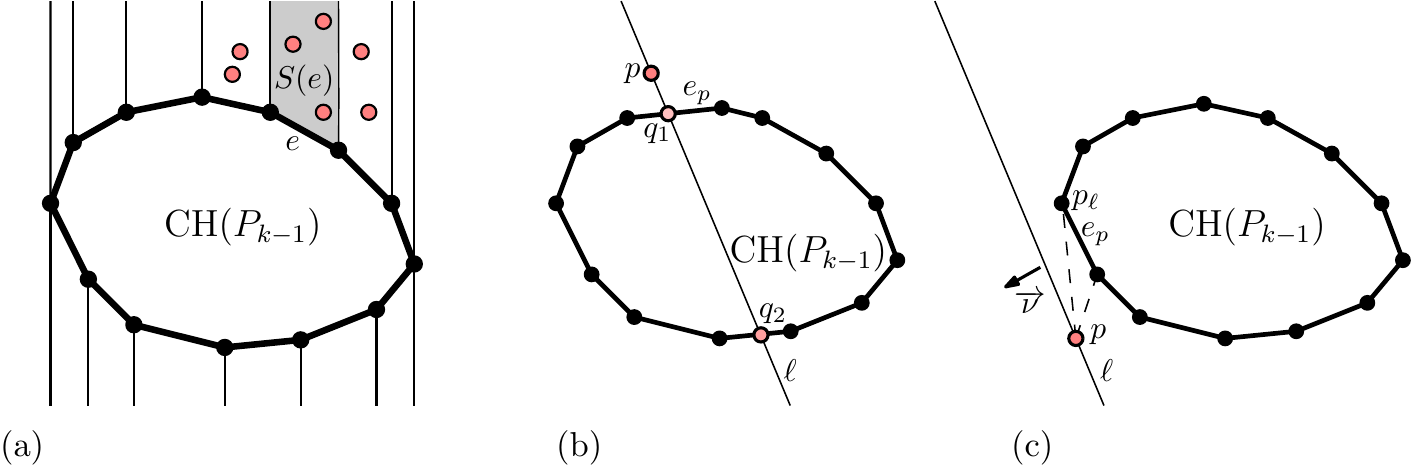}
  \end{center}
  \caption{
  (a) The conflict list $C_e$ of the edge $e \in E(P_{k-1})$ 
      contains all the lightly-shaded 
      points,  
      whereas the refined conflict list $C^{*}(e)$ 
      has only those 
      points in the vertical slab $S(e)$;
      (b--c)   
      The edge $e_p$ of $E(P_{k-1})$ is visible to $p$ when
      (b) $\ell$ intersects $\conv(P_{k-1})$, or (c) $\ell$ does not 
      meet $\conv(P_{k-1})$. In this case $p_{\ell}$ is an extreme vertex
      for the direction $\protect\overrightarrow{\nu}$, 
      and the two dashed lines depict the visibility lines between $p$ 
      and the two respective endpoints of $e_p$.
    }
  \label{fig:visible_edge}
\end{figure*}

We next process the lines in $L_2$. Note that all points on the 
lines in $L_2$ conflict with at least one edge in $E(P_{k-1})$, 
since no line in $L_2$ meets $\conv(P_{k-1})$. To find these edges  
we determine for each $\ell \in L_2$
a vertex $p_\ell$ on the boundary of $\conv(P_{k-1})$
that is extreme for $\ell$.\footnote{By this we mean that $p_\ell$ is
  extremal in the direction of the outer normal 
  of the halfplane that is bounded by $\ell$ and contains $\conv(P_{k-1})$.}
This can be done in total time $O(n)$ by ordering 
$E(P_{k-1})$ 
and $L_2$ according to their slopes 
(the latter being performed during preprocessing), 
and then merging these two lists in linear time.
Next, fix such a line $\ell \in L_2$, and let $p \in \ell$ be a query point,
then $p$ must see one of the two edges in $E(P_{k-1})$
incident to $p_\ell$ (which can be determined in constant time given 
$p_\ell$), and we thus set $e_p$ to be the corresponding edge;
see Figure~\ref{fig:visible_edge}(c).

We are now ready to determine, for each point $p \in  P_{k}$
outside $\conv(P_{k-1})$, the slab $S(e_p^{*})$ that contains it
(note that $e_p^{*}$ must be vertically visible from $p$). 
If $e_p$ is vertically visible from $p$, we set $e_p^{*} := e_p$.
Otherwise, we walk along (the boundary of) $\conv(P_{k-1})$, starting from 
$e_p$ and progressing in the appropriate direction (uniquely determined
by $p$ and $e_p$), until the slab containing $p$ is found.
Using cross pointers between the edges and the points, we can easily 
compute $C^{*}_e$ for each $e \in E(P_{k-1})$.
By construction, all 
traversed edges are in conflict with $p$, 
and thus the overall time for this procedure is proportional to the 
total size of the conflict lists $C_e$. 
Recalling that $c_e = |C_e|$, we obtain
\[
\EX\Bigl[\sum_{e \in  E(P_{k-1})} c_e\Bigr] = O(z_{k}) = O(n),
\]
by Lemma~\ref{lem:conf_lists} with $f: m \mapsto m$.
This concludes the construction of the refined conflict lists.

\noindent\textbf{Computing $\conv(P_k)$.}
We next describe how to construct the upper hull of $P_k$, 
the analysis for the lower hull is analogous.
Let $\langle e_1, \ldots, e_{s}\rangle$
be the edges
along the upper hull of $P_{k-1}$, ordered from left to right. 
For each $e_i$, we sort the points in $C^{*}_{e_i}$ 
according to their $x$-order, using, e.g., merge sort.
We apply the same procedure for the extreme points.
We then concatenate the sorted lists 
$C^{*}_{v_l}, C^{*}_{e_1}, C^{*}_{e_2}, \ldots, C^{*}_{e_s}, C^{*}_{v_r}$,
and merge the result with the vertices of 
the upper hull of $P_{k-1}$. Call the resulting list $Q$, and use
Graham's scan to find the upper hull of $Q$ in time $O(|Q|)$. This
is also the upper hull of $P_{k}$.
Applying once again Lemma~\ref{lem:conf_lists} with $f: m \mapsto m \log m$,
and putting $c^{*}_e \eqdef |C^{*}_{e}|$, 
$c^{*}_{v_l} \eqdef |C^{*}_{v_l}|$,
$c^{*}_{v_r} \eqdef |C^{*}_{v_r}|$, 
and $A > 0$ an absolute constant, 
the overall expected running time of this step is bounded by
\begin{align*}
\label{eq:net_size}
  &\EX\bigl[A \cdot (c^{*}_{v_l} \log{c^{*}_{v_l}} + c^{*}_{v_r} \log{c^{*}_{v_r}} +
    \sum_{e \in E(P_{k-1})} c^{*}_e \log c^{*}_e )\bigr]  \\
 &\leq \EX\bigl[3A \cdot \sum_{e \in E(P_{k-1})} c_e \log c_e\bigr] 
   =
  O\left(z_{k} \log \left(z_{k}/z_{k-1}\right) \right) \\
   &= 
  O\left((n/\log^{(k)} n)
  \log \left(\log^{(k-1)} n/\log^{(k)} n\right)\right) 
   = O(n),
\end{align*}
because by definition $C^{*}_{e} \subseteq C_e$, so $c^*_e \leq c_e$, and
$c^*_{v_l} \leq c_{e_1} + c_{e_2}$ for two edges $e_1$, $e_2$ of $E(P_{k-1})$ (and
similarly for $c^*_{v_r}$).
In total,
we obtain that the expected time to construct 
$\conv(P_{k})$ given $\conv(P_{k-1})$ is $O(n\alpha(n))$,
and since there are $\log^* n$ iterations, 
the total running time is $O(n\alpha(n)\log^* n)$. 

We note that the analysis proceeds almost verbatim when $L$ is just a set 
of \emph{line segments} in the plane.
In this case, we preprocess the lines containing the input segments,
and proceed as in the original problem.
We have thus shown:

\begin{theorem}
  \label{thm:CH_main}
  Using $O(n^2)$ space and time, we can preprocess a set $L$ of $n$ lines 
  in the plane (given in general position), such that given any point set 
  $P$ with each point lying on a distinct line in $L$, 
  we can construct $\conv(P)$ in expected time $O(n\alpha(n)\log^* n)$.
  The same result holds if $L$ is a set of $n$ line segments whose
  supporting lines are in general position.
\end{theorem}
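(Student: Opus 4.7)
The plan is to assemble the ingredients already laid out in this section into a single bound. The preprocessing is straightforward: build the arrangement $\A(L)$ together with its vertical decomposition in $O(n^2)$ time and space via a standard incremental algorithm. The real work is in the query phase, where I would argue correctness and running time for the batched randomized incremental construction driven by the gradation $P_1 \subseteq P_2 \subseteq \cdots \subseteq P_{\log^* n} = P$ with $|P_k|=\min\{\lfloor n/\log^{(k)} n\rfloor,n\}$.

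For the base case $\conv(P_1)$, since $|P_1|=O(n/\log n)$, sorting by $x$ and running Graham's scan takes $O(n)$ time. The inductive step from $\conv(P_{k-1})$ to $\conv(P_k)$ has three pieces whose costs I would bound separately: (i) computing the zones of $\UH(P_{k-1})$ and $\LH(P_{k-1})$ inside $\A(L)$, which is $O(n\alpha(n))$ by the Bern--Eppstein--Plassmann--Yao zone bound for convex curves, using the vertical decomposition to move between adjacent faces in constant time; (ii) for each new point $p\in P_k\setminus P_{k-1}$, finding a visible edge $e_p\in E(P_{k-1})$ via the $L_1$/$L_2$ case analysis (the $L_2$ case uses the slope-sorted merge of $E(P_{k-1})$ with $L_2$, prepared during preprocessing) and then walking along the hull from $e_p$ until its slab $S(e_p^{*})$ is located; and (iii) sorting the refined conflict lists $C^{*}_e$, merging them with the vertices of $\conv(P_{k-1})$, and running Graham's scan to obtain $\conv(P_k)$.

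The accounting that makes everything fit is Lemma~\ref{lem:conf_lists} applied twice per round. With $f(m)=m$, the hull-walk cost in (ii) is bounded by the expected total conflict-list size, which is $O(z_k)=O(n)$. With $f(m)=m\log m$, the sorting cost in (iii) is
\[
O\bigl(z_k\log(z_k/z_{k-1})\bigr)=O\bigl((n/\log^{(k)} n)\log(\log^{(k-1)} n/\log^{(k)} n)\bigr)=O(n),
\]
by the telescoping property of the iterated logarithm. Summing the $O(n\alpha(n))$ cost per round over the $\log^* n$ rounds yields the claimed $O(n\alpha(n)\log^* n)$ total.

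The step I expect to require the most care is the charging argument for the two extreme slabs $S(v_l), S(v_r)$, since $v_l, v_r$ are vertices rather than edges and thus cannot be plugged directly into Lemma~\ref{lem:conf_lists}. The remedy is the containment $C^{*}_{v_l}\subseteq C_{e_1}\cup C_{e_2}$ for the two edges of $E(P_{k-1})$ incident to $v_l$ (and analogously for $v_r$), so $c^{*}_{v_l}\log c^{*}_{v_l}$ is absorbed into $O\bigl(\sum_{e\in E(P_{k-1})} c_e\log c_e\bigr)$ up to a constant factor. For the line segment variant I would run the same algorithm on the arrangement of the supporting lines; the conflict-list and visibility bookkeeping is unaffected by the presence of segment endpoints, so the bounds carry over verbatim.
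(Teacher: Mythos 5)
Your proposal is correct and follows essentially the same route as the paper's proof: the same gradation, the $O(n\alpha(n))$ zone computation of $\UH(P_{k-1})$ and $\LH(P_{k-1})$ in $\A(L)$, the two applications of Lemma~\ref{lem:conf_lists} (with $f(m)=m$ for the hull walk and $f(m)=m\log m$ for sorting the refined conflict lists), the absorption of the extreme slabs via $C^{*}_{v_l}\subseteq C_{e_1}\cup C_{e_2}$, and the reduction of the segment case to the supporting lines. No gaps to report.
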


\textbf{Remark.}
An inspection of the proof of Theorem~\ref{thm:CH_main} shows
that the total expected conflict size, 
over all iterations $k$,  
is only $O(n)$. 

\subsection{Better Bounds for Oblivious Points}
\label{sec:oblivious}

We now present an improved solution under the \emph{obliviousness model},
where we assume that the points are oblivious to the random choices 
during the preprocessing step.
Specifically, this implies that an adversary cannot pick the point set $P$
in a malicious manner, as it is not aware of the random choices at the 
preprocessing step. 
This fairly standard assumption has appeared in 
various studies (see, e.g.,~\cite{AfshaniBaCh09,Chan09}).  
In the discussion at the end of this section we describe this issue in more detail.

\noindent\textbf{Preprocessing.}
We now construct a gradation 
$L_1 \subseteq L_2 \subseteq \cdots \subseteq L_{1+\log\log n} = L$ of 
the lines during the preprocessing phase, 
where the set sizes decrease geometrically.
Specifically, $|L_1| = y_1 = \lceil n/\log{n} \rceil$, 
and for $k = 2, \ldots, 1+\log\log{n}$, $L_{k-1}$ is a random 
subset of $L_k$ of size 
\begin{equation}
  \label{eq:geom_grad}
  |L_{k-1}| = y_{k-1} := \left\lceil{\frac{y_{k}}{2}}\right\rceil = 
  \left\lceil{\frac{1}{2} \cdot \frac{n}{2^{\log\log{n}-k+1}}}\right\rceil .
\end{equation}
We construct each arrangement $\A(L_k)$ in  $O(n^2 2^{2k-2}/\log^2{n})$
time, for a total $O(n^2)$ time 
over all gradation steps. 

\noindent\textbf{Query.}
Given an exact input $P$, we first follow the gradation produced 
at the preprocessing stage, and generate the corresponding gradation 
$P_1 \subseteq P_2 \subseteq \ldots \subseteq P_{1+\log\log{n}} = P$,
where $P_k = P \cap L_k$,
for all $k$. 
By the obliviousness assumption, each 
$P_{k-1}$ is an unbiased sample of $P_k$, 
so Lemma~\ref{lem:conf_lists} applies 
(see once again the discussion below). 
Moreover, the key observation is that in order to obtain 
$\conv(P_{k})$ from $\conv(P_{k-1})$, it suffices
to confine the search to the arrangement $\A(L_{k})$ 
instead of the entire 
arrangement $\A(L)$ as in Section~\ref{sec:quickCH}. 
Thus, we first construct $\conv(P_1)$ in $O(n)$ time as before.  
Next, to obtain $\conv(P_k)$ from $\conv(P_{k-1})$, we construct the zones of 
$\UH(P_{k-1})$ and $\LH(P_{k-1})$ in $\A(L_{k})$ in 
$O(y_{k}\alpha(y_{k}))$ time, and then compute the refined conflict lists 
just as in Section~\ref{sec:quickCH}. 
The overall 
expected time to produce these lists is $O(y_{k})$, 
totaling $O(n)$ 
over all steps.  
As in Section~\ref{sec:quickCH},
the expected time (of the final step) to compute $\UH(P_{k})$ 
and $\LH(P_{k})$ is 
$ 
O\left(y_{k} \log \left(y_{k}/y_{k-1}\right) \right) = 
O\left(y_{k}\right)$,
 
by~(\ref{eq:geom_grad}).
Thus, the expected running time at the $k$th step is dominated 
by the zone construction, so the overall expected running time 
is

$
O\left(\sum_{k=1}^{1+\log\log{n}} y_{k}\alpha(y_{k}) \right) = 
O\left(n \alpha(n)\right)$, 
as is easily verified. Thus, 

\begin{theorem}
  \label{thm:CH_obliviousness}
  Using $O(n^2)$ space and time, we can preprocess a set $L$ of $n$ 
  lines in the plane,
  such that for any point set $P$ with each point lying on a distinct line 
  of $L$, we can construct $\conv(P)$ in expected time $O(n \alpha(n))$ 
  assuming obliviousness.
\end{theorem}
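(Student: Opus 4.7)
The plan is to mimic the construction of Section~\ref{sec:quickCH}, but push the gradation into the preprocessing stage so that at the $k$-th round of the query the zone tracing is confined to the smaller sub-arrangement $\A(L_k)$ rather than to all of $\A(L)$. The obliviousness assumption is exactly what makes this work: since the query point set $P$ is chosen without knowledge of the random gradation $L_1 \subseteq \cdots \subseteq L_{1+\log\log n} = L$ selected in advance, the induced sequence $P_k \eqdef P \cap L_k$ is, conditional on $P$, a uniform random subset of $P_{k+1}$ of size $y_{k}$, and Lemma~\ref{lem:conf_lists} still applies. This is the main technical point I need to be careful about, and I would make it explicit before using the lemma.

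For the preprocessing, I would fix the geometric gradation of~\eqref{eq:geom_grad}, then explicitly build and store $\A(L_k)$ together with its vertical decomposition for every $k$, and additionally pre-sort the lines of each $L_k$ by slope (this powers the step of Section~\ref{sec:quickCH} that handles lines not meeting the current hull). Each $\A(L_k)$ costs $O(y_k^2)$ in time and space; since $y_k$ roughly doubles with $k$, the geometric sum is dominated by its last term $O(n^2)$.

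At query time I first extract $P_k = P \cap L_k$ for each $k$, compute $\conv(P_1)$ from scratch via Graham's scan in $O(|P_1|) = O(n/\log n)$ time, and then, for $k \ge 2$, execute verbatim the procedure of Section~\ref{sec:quickCH}, only now inside $\A(L_k)$ rather than $\A(L)$: trace the zones of $\UH(P_{k-1})$ and $\LH(P_{k-1})$ to locate each line of $L_k$ with respect to $E(P_{k-1})$, compute the refined conflict lists, sort and merge them with the vertices of $\conv(P_{k-1})$, and rerun Graham's scan. The zone-complexity bound gives $O(y_k \alpha(y_k))$ for the tracing step; Lemma~\ref{lem:conf_lists} with $f(x)=x$ and with $f(x)=x\log x$ gives $O(y_k)$ and $O(y_k \log(y_k/y_{k-1})) = O(y_k)$ respectively for the conflict-list construction and the merging, where the last equality uses that the gradation is geometric.

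Summing over $k = 1, \dots, 1+\log\log n$, the dominant contribution is the zone tracing, and because $y_k$ grows geometrically while $\alpha$ is non-decreasing, $\sum_k y_k \alpha(y_k) = O(n\alpha(n))$. The expected bound on $\conv(P)$ follows. The only obstacle that actually requires care, as noted above, is the probabilistic justification that obliviousness licenses the application of Lemma~\ref{lem:conf_lists} to the induced gradation on $P$; the rest is a direct replay of Section~\ref{sec:quickCH} with smaller arrangements.
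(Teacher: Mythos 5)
Your proposal is correct and follows essentially the same route as the paper: the gradation of $L$ is built at preprocessing time with geometrically increasing sizes (total $O(n^2)$), obliviousness makes each $P_{k-1} = P \cap L_{k-1}$ an unbiased sample of $P_k$ so that Lemma~\ref{lem:conf_lists} applies, the zone tracing is confined to $\A(L_k)$, and the geometric growth makes both the merge cost $O(y_k \log(y_k/y_{k-1})) = O(y_k)$ and the total $\sum_k y_k\alpha(y_k) = O(n\alpha(n))$. The only (harmless) slip is the claim that Graham's scan builds $\conv(P_1)$ in $O(|P_1|)$ time; on unsorted input it takes $O(|P_1|\log|P_1|) = O(n)$, which still suffices.
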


\noindent\textbf{Discussion.}
The issue captured by the obliviousness assumption is: 
how much does the adversary know about the 
preprocessing phase?  If the adversary manages to obtain the coin flips 
performed during the 
preprocessing stage, then this enables a malicious choice of the input. 
This phenomenon is particularly striking in the case of \emph{hashing}: 
if the adversary knows the random choice of the hash function, 
a bad set of inputs can hash all keys to a single slot, 
completely destroying the hash table. On the other hand, if the
adversary is \emph{oblivious} to the hash function, the expected running
time per operation is only $O(1)$; see, e.g., 
\cite[Chapter~11]{CormenLeRiSt09}.

In our model we encounter a similar phenomenon. Even though the impact
is not as disastrous as for hashing, assuming obliviousness for the adversary
can improve our running time by a factor of $O(\log^*n)$. To illustrate 
the effect of obliviousness in our setting,
consider the scenario illustrated in Figure~\ref{fig:oblivious}. 
In this case, we have a set $L$ of lines and a random subset $L' \subseteq L$.
The adversary can pick the point set $P$ so that $P' := P \cap L'$ is a 
biased sample of $P$ in a sense that violates the properties of 
Lemma~\ref{lem:conf_lists}.  In particular, the total number of conflicts 
between the edges of $\conv(P')$ and $P$ may become quadratic, which makes 
the random incremental construction inefficient.  Nevertheless, if the 
adversary is oblivious with respect to the sample, the points in $P'$ 
behave as an unbiased sample of $P$.
\begin{figure}
  \begin{center}
    \includegraphics{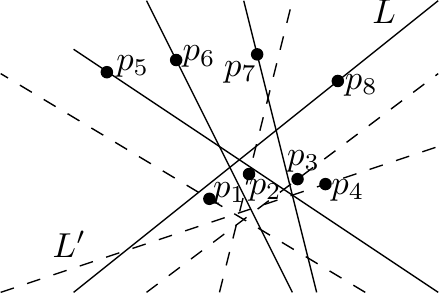} 
  \end{center}
  \caption{
      Illustrating the obliviousness assumption.
      The set $L$ contains all the lines in the figure.
      The set $L'$ of the dashed lines depicted in the figure is a random 
      subset of $L$.
      If the adversary knows this random subset $L'$, it can place the 
      points $p_1, \ldots, p_8$ as illustrated in the figure, thereby 
      constructing $\conv(p_1, \ldots, p_4)$ first.
      Then each edge on the upper hull $\UH(p_1, \ldots, p_4)$ is in 
      conflict with each of the remaining points $p_5, \ldots, p_8$.
  }
  \label{fig:oblivious}
\end{figure}

\subsection{Extensions and Variants}
\label{sec:extensions}

\noindent\textbf{Diameter- and width-queries.}
Given $\conv(P)$, we can easily compute the 
\emph{diameter} (i.e., a pair of points with maximum Euclidean distance) 
and the \emph{width} of $P$ 
(a strip of minimal width containing all the points in $P$) in linear 
time (see, e.g.,~\cite[Chapter 4]{PS-85}). Hence,
\begin{corol}
  \label{thm:diam_width}
  Using $O(n^2)$ space and time, we can preprocess a set $L$ of $n$ lines 
  in the plane, such that given a point set $P$ with each point lying on 
  a distinct line of $L$, the diameter or width of $P$ can be found in 
  expected time $O(n\alpha(n)\log^* n)$.  The expected running time becomes 
  $O(n \alpha(n))$ assuming obliviousness.
\end{corol}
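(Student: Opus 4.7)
The plan is to reduce the corollary to Theorem~\ref{thm:CH_main} (and Theorem~\ref{thm:CH_obliviousness} for the oblivious variant) by exploiting the well-known fact that both the diameter and the width of a planar point set are determined by its convex hull, and that both can be extracted from $\conv(P)$ in linear time once $\conv(P)$ is represented as a cyclic list of its vertices. Concretely, I would first invoke the preprocessing from Section~\ref{sec:quickCH}: build $\A(L)$ and its vertical decomposition in $O(n^2)$ time and space, which is exactly the preprocessing claimed in the corollary.

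For the query, given the exact point set $P$, I would run the batched randomized incremental algorithm of Section~\ref{sec:quickCH} to obtain $\conv(P)$ in expected $O(n\alpha(n)\log^* n)$ time; under the obliviousness assumption, I would instead use the algorithm of Section~\ref{sec:oblivious}, yielding expected $O(n\alpha(n))$ time. In both cases the output is the sequence of vertices of $\conv(P)$ in clockwise order, which is the representation required by the post-processing step.

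From $\conv(P)$, the diameter can be computed in $O(|\conv(P)|) = O(n)$ time by the rotating calipers technique of Shamos, which identifies all antipodal pairs of vertices and returns the pair of maximum Euclidean distance. The width is likewise obtained in linear time by a rotating calipers sweep that, for each edge of $\conv(P)$, finds the farthest vertex in the perpendicular direction and returns the minimum such distance; see, e.g.,~\cite[Chapter~4]{PS-85}. Both post-processing steps use only $O(n)$ additional time and space, so they are subsumed by the convex hull computation in both the standard and the oblivious settings.

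Since the reduction is immediate and the rotating calipers routines are textbook, there is essentially no genuine obstacle here; the only thing worth checking is that the convex hull is delivered in a form directly amenable to the caliper sweeps, which follows from the construction in Section~\ref{sec:quickCH} that explicitly maintains $\UH(P)$ and $\LH(P)$ as ordered vertex sequences. Combining the two bounds of Theorems~\ref{thm:CH_main} and~\ref{thm:CH_obliviousness} with the linear-time diameter/width extraction yields the stated expected running times, completing the proof.
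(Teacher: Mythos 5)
Your proposal is correct and matches the paper's argument exactly: the paper also reduces the corollary to Theorems~\ref{thm:CH_main} and~\ref{thm:CH_obliviousness} and then extracts the diameter and width from $\conv(P)$ in linear time via the standard rotating-calipers routines cited in~\cite[Chapter 4]{PS-85}. Nothing further is needed.
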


\noindent\textbf{A trade-off between space and query time.}
Our data structure can be generalized
to support a trade-off between preprocessing time (and storage) and the 
query time, using a relatively standard grouping technique~\cite{AD-07,Chan96},
described as follows. 

\noindent\textbf{Preprocessing.}
Let $1 \le m \le n$ be a parameter, 
and, without loss of generality, assume that $n/m$ is an integer.
We partition $L$ into $m$ subsets $L_1, \ldots, L_{m}$ of 
size $n/m$ each, and construct the arrangements $\A(L_k)$, for 
$k=1, \ldots, m$, in overall time and storage $O(n^2/m)$ (cf.~\cite{AD-07,Chan96}).

\noindent\textbf{Query.}
Given an exact input $P$, we first construct $\conv(P_k)$,
where $P_k$ is the subset of points on the lines in $L_k$, 
$k=1, \ldots, m$, in $O((n/m) \alpha(n/m) \log^*{(n/m)})$ (assuming
obliviousness it is $O((n/m) \alpha(n/m))$) expected time, 
for a total expected time of $O(n \alpha(n/m) \log^*{(n/m)})$ 
(resp., $O(n \alpha(n/m))$) 
over all these subsets.
Having $\conv(P_k)$ at hand for all $k$, we merge $\UH(P_1)$, $\ldots$, $\UH(P_m)$ in 
$O(n \log{m})$ time~\cite{CormenLeRiSt09}, 
thereby producing a list $Q$ of points sorted according to their $x$-order. 
We then use Graham's scheme to
construct the upper hull of $Q$ (and thus of $P$) in $O(|Q|)$ time. We produce 
the lower hull of $P$ in an analogous manner. We have thus shown:

\begin{corol}
  \label{col:trade_off}
  Fix $1 \leq m \leq n$. 
  In total $O(n^2/m)$ time and space, we can preprocess a set
  $L$ of $n$ lines in the plane, such that given a point set $P$
  with each point lying on a distinct line of $L$, 
  we can construct $\conv(P)$ in expected time $O(n(\log{m} + \alpha(n/m)\log^*{(n/m)}))$.
  The running time becomes $O(n(\log{m} + \alpha(n/m)))$
  assuming obliviousness.
\end{corol}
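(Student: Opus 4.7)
The plan is to follow the grouping scheme sketched in the text: at preprocessing time, partition $L$ arbitrarily into $m$ groups $L_1,\ldots,L_m$, each of size $n/m$, and for every group build the arrangement $\A(L_k)$ together with its vertical decomposition exactly as in Section~\ref{sec:quickCH}. Since $\A(L_k)$ has complexity $\Theta((n/m)^2)$, the total preprocessing cost is $m\cdot O((n/m)^2)=O(n^2/m)$, matching the space/time bound claimed in the corollary.

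For a query, I would first split $P$ according to the groups by setting $P_k=P\cap L_k$, so that $P_k$ is a set of $n/m$ points, each on a distinct line of $L_k$; this is exactly the input format demanded by Theorems~\ref{thm:CH_main} and~\ref{thm:CH_obliviousness}. Applying the appropriate theorem to the pair $(L_k,P_k)$ independently for each $k$ produces $\conv(P_k)$ in expected time $O((n/m)\alpha(n/m)\log^{*}(n/m))$ (respectively $O((n/m)\alpha(n/m))$ under obliviousness); summing over the $m$ groups gives $O(n\,\alpha(n/m)\log^{*}(n/m))$ (resp.\ $O(n\,\alpha(n/m))$) expected time for all the sub-hulls. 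Note that the data structure from Section~\ref{sec:quickCH} is applied to each $\A(L_k)$ in isolation, so no interaction between groups is needed at this stage.

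It remains to assemble $\conv(P)$ from the $m$ partial hulls. Split each $\conv(P_k)$ into its upper and lower chains $\UH(P_k),\LH(P_k)$, which are already sorted by $x$-coordinate. Merge the $m$ upper chains into a single $x$-sorted list $Q$ using a standard $m$-way merge with a binary heap; since the chains have $O(n)$ points in total and each extraction from the heap costs $O(\log m)$, this step runs in $O(n\log m)$ time. Because $\UH(P)\subseteq\bigcup_k\UH(P_k)$, a single Graham scan on $Q$ outputs $\UH(P)$ in $O(|Q|)=O(n)$ time; the lower hull is computed symmetrically. Adding the $O(n\log m)$ merge cost to the sub-hull cost yields the stated query time $O(n(\log m+\alpha(n/m)\log^{*}(n/m)))$, and the oblivious version follows by invoking Theorem~\ref{thm:CH_obliviousness} in place of Theorem~\ref{thm:CH_main}.

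The only step requiring care is the correctness of the final Graham scan, which rests on the inclusion $\UH(P)\subseteq\bigcup_k\UH(P_k)$ (any extreme point of $P$ in an upward direction is also extreme within its own group). Everything else is a direct reduction to the previous theorems plus a textbook $m$-way merge, so I do not anticipate a genuine obstacle.
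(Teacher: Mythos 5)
Your proposal is correct and follows essentially the same route as the paper: partition $L$ into $m$ groups, build each arrangement separately for $O(n^2/m)$ preprocessing, compute the per-group hulls via Theorem~\ref{thm:CH_main} (or Theorem~\ref{thm:CH_obliviousness}), then merge the upper (and lower) chains in $O(n\log m)$ time and finish with a linear-time Graham scan. The only addition is your explicit justification of the inclusion $\UH(P)\subseteq\bigcup_k\UH(P_k)$, which the paper leaves implicit; it is a valid and welcome clarification.
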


Note that for small values of $m$, Corollary~\ref{col:trade_off} in fact
yields an improvement over Theorem~\ref{thm:CH_main}. Specifically,
by setting $m \eqdef 2^{\alpha(n)}$, we have that the space and preprocessing
requirement in Theorem~\ref{thm:CH_main} can be lowered to
$O(n^2/2^{\alpha(n)})$, while the expected query time remains
$O(n \alpha (n) \log^*(n))$.

\noindent\textbf{Discussion.}
As noted in the introduction, the bounds in Corollary~\ref{col:trade_off} are somewhat disappointing.
However, the study by Ali Abam and de Berg~\cite{AD-07} might provide (albeit, weak) evidence  
that these bounds are unlikely to be improved. Indeed, they have studied the \emph{kinetic sorting problem},
where we are given a set of $n$ points moving continuously on the real line, and the goal is to maintain
a structure on them so that at any given time the points can be sorted efficiently. 
Ali Abam and de Berg~\cite{AD-07} showed that even when the trajectories of these points
are just linear functions, then under the 
comparison graph model (see~\cite{AD-07} for the definition) one cannot 
answer a query faster than $cn\log{m}$ time using less than $c'n^2/m$ 
preprocessing time and storage, for appropriate absolute constants $c, c' > 0$.
As discussed in~\cite{AD-07}, this may indicate that better trade-offs for the 
kinetic convex hull problem seem unlikely. Nevertheless, it may not provide a rigorous proof, as
the analysis for the kinetic sorting problem strongly relies on the one-dimensionality of the points, 
and does not work for points in the plane, at least under the context of the proofs given in~\cite{AD-07}.
Still, we have chosen to present those details in this paper, as we tend to believe that bounds of this
kind could also apply to our problem (which is even more difficult than the kinetic convex hull problem,
as described in the introduction), and that a rigorous analysis could stem from the approach 
in~\cite{AD-07}.
This would imply that the trade-off bounds given in Corollary~\ref{col:trade_off} are nearly optimal.

\noindent\textbf{An output-sensitive algorithm.}
Our algorithm can be made sensitive to the size $h$
of the convex hull by adapting a technique of Ali Abam and de Berg~\cite{AD-07} that uses
\emph{gift wrapping queries}.
The setting for queries of this kind is as follows.
Let $Q$ be a point set, given an arbitrary point $p$
(not necessarily from $Q$) and a line $\ell$ through $p$, such 
that all points of $Q$ lie on the same side of $\ell$, report a point 
$q \in Q$ that is hit first when $\ell$ is rotated around $p$
(say, in clockwise direction).
 
\noindent\textbf{A search on the value of $h$.}
Since the output size $h$ is not given in advance,
we perform a search on its actual value, over at most $\log^*{n} - 1$ iterations, 
in the query step, and apply all tested values at the preprocessing step, as described below.
The tested values of $h$ are chosen in the following manner.
Let $h_i$ be the value of $h$ at the $i$th round.
Initially, $h_1 = 1$,
and put $h_i = 2^{(i-1)}$, for $i \ge 2$, where $2^{(\cdot)}$ is the power-tower function.
We continue the search as long as $h_i \le \log{n}$;
let $t$ be the number of rounds thus obtained.
By construction $t \le \log^*{n} - 1$.
When $h_i > \log{n}$, 
we stop the search and resort to the bound in Theorem~\ref{thm:CH_main}---see below.

\noindent\textbf{Preprocessing.}
At each round $i=1, \ldots, t$, 
we set a parameter $m_i$ to be
\[
m_i \eqdef \max \left\{1, \frac{n}{h_i\log{h_i}} \right\} ,
\]
and partition $L$ into $m_i$ roughly equal subsets $L^{(i)}_1, \ldots, L^{(i)}_{m_i}$.
We then proceed in a similar manner as described earlier for the trade-off between
space and query time.
That is, for each $k = 1, \ldots, m_i$, we construct the arrangement 
$\A(L^{(i)}_k)$ in overall time and storage $O(n^2/m_i)$.

The total time and storage consumed over all rounds $i$ is thus
\[
O\left(n^2 + \sum_{i=1}^{t} n h_i \log{h_i}\right) = 
O(n^2) ,
\] 
since the sum over the rounds $i$ is dominated by the last term,
which is $O(n \log{n}\log\log{n})$.

\noindent\textbf{Query.} 
Given a point set $P$ with each point lying on a distinct line of $L$, 
we construct $\conv(P)$ in an output-sensitive manner, as follows. 

At the $i$th round, let $P^{(i)}_k \eqdef P \cap L^{(i)}_k$, for
$k = 1, \ldots, m_i$.
Construct $\conv(P^{(i)}_1)$, $\conv(P^{(i)}_2)$, $\ldots$, $\conv(P^{(i)}_{m_i})$, as
in Section~\ref{sec:quickCH}.
This takes total time 
$O(n \alpha(n/m_i) \log^*{(n/m_i)}) = O(n \alpha(h_i)\log^*(h_i))$ (resp.
$O(n \alpha(n/m_i)) = O(n \alpha(h_i))$ assuming obliviousness).

The primitive operation we would like to obtain is a gift wrapping query on $P$.
To this end, we perform standard gift wrapping queries for each subset 
$P^{(i)}_k$ in $O(\log{(n/m_i)})$ time (see, e.g.,~\cite{PS-85}). This 
yields a set of $m_i$ candidates, from which we produce the final answer to the query. 
In total, a gift wrapping query takes $O(m_i\log{(n/m_i)}) = O(n/h_i)$ steps.

We now attempt to construct $\conv(P)$. 
We begin with a gift wrapping query for the 
leftmost vertex $p$ of $P$ and the vertical line $\ell_{p}$ passing 
through $p$. This yields a pair $(p',\ell')$, where $p'$ is the first point 
hit by $\ell$, and $\ell'$ is the line through $p$ and $p'$. We continue until 
(i) we hit $p$ again, or 
(ii) we have performed $h_i$ gift wrapping queries.
This results in a running time of 
\[
O\left(h_i \cdot m_i\log{(n/m_i)}\right) = O\left(h_i \cdot \frac{n}{h_i}\right) = O(n) .
\]
The round \emph{succeeds} if we reach $p$. Otherwise it fails, and
we proceed to round $i+1$. After $t \le \log^*n -1$ unsuccessful
rounds (i.e., if $h_i > \log{n}$), we compute $\conv(P)$ directly via 
Theorem~\ref{thm:CH_main}.

It is easy to verify that the actual number of rounds that we need
is at most $O(\log^* h)$.
Combining the bounds above, it follows that 
the overall query time is $O(n\alpha(h) (\log^* {h})^2)$
(or $O(n\alpha(h)\log^* {h})$ assuming obliviousness), as asserted.
We have thus shown:

\begin{corol}
  \label{thm:output_sensitive}
  In total $O(n^2)$ time and space, we can preprocess a set
  $L$ of $n$ lines in the plane, such that given a point set $P$
  with each point lying on a distinct line of $L$, 
  $\conv(P)$ can be found in expected time 
  $O(n\alpha(h)(\log^*{h})^2)$, where $h$ is the 
  output size. The expected running time becomes
  $O(n\alpha(h)\log^* h)$ assuming obliviousness.
\end{corol}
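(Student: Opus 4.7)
The plan is to verify that the algorithm described just above the statement really achieves the claimed bounds, by separately bounding the preprocessing cost, the per-round query cost, and the number of rounds used in the search on $h$.

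First I would bound the preprocessing. At round $i$, we partition $L$ into $m_i = \max\{1, n/(h_i\log h_i)\}$ blocks of roughly $n/m_i$ lines and build an arrangement for each block, for a total of $O(n^2/m_i) = O(n h_i \log h_i)$ time and space in that round. Summing over $i=1,\ldots,t$, since $h_i$ is defined by a power-tower recursion ($h_i = 2^{(i-1)}$) and the bound $h_t \le \log n$ cuts the recursion off, the last term dominates geometrically and the sum is $O(n \log n \log\log n) = O(n^2)$. The additional $O(n^2)$ from the Theorem~\ref{thm:CH_main} data structure, used as a fallback, keeps the total at $O(n^2)$.

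Next I would analyze a single round $i$ of the query. We use the preprocessed arrangement $\A(L^{(i)}_k)$ for each block to apply Theorem~\ref{thm:CH_main} (or Theorem~\ref{thm:CH_obliviousness}) to compute $\conv(P^{(i)}_k)$, for a total expected cost of $O(n \alpha(n/m_i) \log^*(n/m_i))$; since $n/m_i = \Theta(h_i \log h_i)$, this simplifies to $O(n \alpha(h_i) \log^* h_i)$ (and $O(n\alpha(h_i))$ under obliviousness). Each gift-wrapping query on $P$ is resolved by $m_i$ standard gift-wrapping queries, one per block, each in $O(\log(n/m_i))$ time, and combining the $m_i$ candidates in linear time gives a cost of $O(m_i \log(n/m_i)) = O(n/h_i)$ per wrap. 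Performing at most $h_i$ wraps thus contributes $O(n)$, so the round is dominated by the hull construction on the blocks.

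Now I would bound the number of rounds. By correctness of gift wrapping, the round succeeds the first time $h_i \ge h$; since $h_i = 2^{(i-1)}$ is a power tower, this happens for some $i^{*} = O(\log^{*} h)$. For every $i \le i^{*}$ we have $\alpha(h_i) \le \alpha(h)$ and $\log^{*} h_i \le \log^{*} h$, so each successful or failed round contributes $O(n \alpha(h) \log^{*} h)$ expected time, and summing over the $O(\log^{*} h)$ rounds yields the claimed $O(n \alpha(h) (\log^{*} h)^2)$ bound; under obliviousness, each round costs $O(n\alpha(h))$, giving $O(n\alpha(h)\log^{*}h)$. If the search exhausts all $t \le \log^{*} n - 1$ tested values without success, then $h > \log n$, so $\alpha(h) = \alpha(n)$ and $\log^{*} h \ge \log^{*} n - O(1)$, and invoking Theorem~\ref{thm:CH_main} directly costs $O(n\alpha(n) \log^{*} n)$, which fits the bound.

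The main obstacle is not any single calculation but rather keeping careful track of how all the different parameters scale together: $m_i$, $n/m_i$, $h_i$, and the number of rounds $t$. In particular one must verify that the preprocessing geometric sum really is dominated by its last term (so the power-tower growth of $h_i$ is critical), and that the fallback to Theorem~\ref{thm:CH_main} when $h > \log n$ matches the output-sensitive target bound. Once these are confirmed, the statement follows by straightforward summation.
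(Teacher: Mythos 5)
Your proposal is correct and follows essentially the same route as the paper: the corollary's justification there is exactly the preceding algorithm description, with the same accounting of the preprocessing sum dominated by its last (power-tower) term, the same $O(n\alpha(h_i)\log^* h_i)$ per-round hull cost plus $O(n)$ for the $h_i$ gift-wrapping steps, the same $O(\log^* h)$ bound on the number of rounds, and the same fallback to Theorem~\ref{thm:CH_main} when $h > \log n$. The only (harmless) looseness is that inequalities like $\alpha(h_i)\le\alpha(h)$ and $\log^* h_i\le\log^* h$ hold only up to additive constants since $h_{i^*}$ may overshoot $h$, which is the same level of rigor the paper itself uses.
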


\section{Levels in Arrangements}
\label{sec:levels}

\noindent\textbf{Preliminaries.}
Let $L$ be a set of $n$ lines in the plane (in general position). 
Given a point $p$, the \emph{level} of $p$ with respect to $L$ is the 
number of lines in $L$ intersected by the open downward vertical ray 
emanating from $p$.
For an integer $k \ge 0$, the \emph{$k$-level} of the arrangement
$\A(L)$, denoted by $\lev_{k}(L)$,
is the closure of all edges of $\A(L)$ whose interior points have level $k$ 
with respect to $L$. It is a monotone piecewise-linear chain. In particular,
$\lev_0(L)$ is the so-called ``lower envelope'' of $L$;
see, e.g.,~\cite[Chapter~5.4]{SA-95} and Figure~\ref{fig:levels}. 
The \emph{$(\leq k)$-level} of $\A(L)$, denoted 
by $\lev_{\leq k}(L)$, is the complex induced by all cells
of $\A(L)$ lying on or below the $k$-level, and thus its edge set is
the union of $\lev_i(L)$ for $i=0, \ldots, k$; its overall 
combinatorial complexity is $O(nk)$ (see, e.g.,~\cite{CS-89, SA-95}).

In what follows we denote by $V_{q}(M)$ (resp., $V_{\leq q}(M)$) the set of
vertices of $\lev_{q}(M)$ (resp., $\lev_{\leq q}(M)$), where 
$q \geq 0$ is an integer parameter and $M$ is a set of lines in the plane.
It is easy to verify that the combinatorial complexity of $\lev_{q}(M)$
is at most $O(1 + |V_{q}(M)|)$ (see once again~\cite{SA-95}).
Throughout this section, we use the \emph{Vinogradov}-notation:
$f \ll g$ means $f = O(g)$ and $f \gg g$ means $f = \Omega(g)$. 
In addition, we write $\EX_{X}[\cdot]$ to emphasize that 
we take the expectation with respect
to the random choice of $X$ (the other variables are considered
constant).

\begin{figure}
  \begin{center}
    {\includegraphics[scale=0.4]{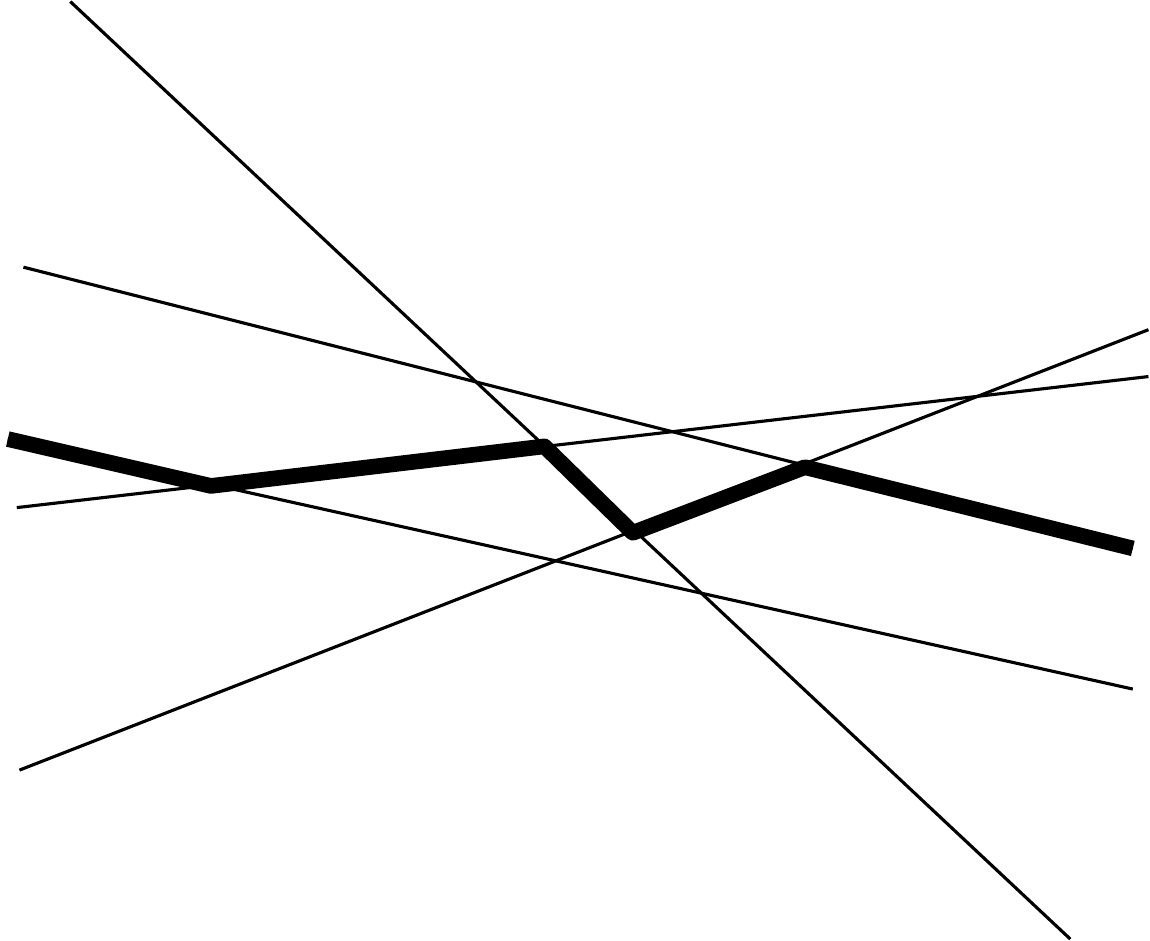} } 
  \end{center}
  \caption{
    The $2$-level in an arrangement of lines.
  }
  \label{fig:levels}
\end{figure}

The best currently known bound for the worst-case complexity 
of $\lev_{k}(L)$ is $O(n k^{1/3})$~\cite{Dey98}.
Nevertheless, since the overall combinatorial complexity of, say, 
$\lev_{\leq 2k}(L)$ is only $O(nk)$~\cite{CS-89}, 
it follows that the \emph{average} 
size of $V_{i}(L)$, for each $i=k, \ldots, 2k$, is only $O(n)$.
Specifically, we have (see also~\cite{EverettRovKr96} for a similar
property): 

\begin{claim}
  \label{clm:lin_avg}
  Let $\tk$ be a random integer in the range $\{k$, $\ldots$, $2k\}$. 
  Then, for any subset $S \subseteq L$, we have
  \[ 
  \EX_{\tk}[|V_{\tk}(S)|] \ll |S| .
  \]
\end{claim}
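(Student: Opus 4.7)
The plan is to bound the expectation by summing $|V_i(S)|$ over $i \in \{k, \ldots, 2k\}$ and using the Clarkson-Shor bound on the total complexity of $\lev_{\leq 2k}(S)$.

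First I would recall that every vertex $v$ of $\A(S)$ is the crossing of two lines of $S$, and if exactly $j$ other lines of $S$ pass strictly below $v$, then $v$ appears as a corner of exactly two levels, $\lev_j(S)$ and $\lev_{j+1}(S)$. Hence, for any integer $i$, a vertex of $V_i(S)$ is a vertex of $\A(S)$ lying at level $i-1$ or $i$, and
\[
\sum_{i=k}^{2k} |V_i(S)| \;\le\; 2 \cdot \bigl|\{\,v \in \A(S) : \mathrm{level}(v) \leq 2k\,\}\bigr|.
\]
The right-hand side is at most the combinatorial complexity of $\lev_{\leq 2k}(S)$, which is $O(|S|\,k)$ by the standard Clarkson-Shor bound cited in the preliminaries.

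Next I would average. Since $\tk$ is uniform on the $k+1$ integers in $\{k, \ldots, 2k\}$,
\[
\EX_{\tk}\bigl[|V_{\tk}(S)|\bigr] \;=\; \frac{1}{k+1} \sum_{i=k}^{2k} |V_i(S)| \;\ll\; \frac{|S|\,k}{k+1} \;\ll\; |S|,
\]
which yields the claim.

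There is no real obstacle here; the whole content is packaged into the two facts above (each vertex contributes to at most two levels, and the $(\le 2k)$-level has complexity $O(|S|k)$). The only minor point to verify cleanly is the conversion between ``vertex of the arrangement at level $\le 2k$'' and ``vertex of $\lev_{\leq 2k}(S)$'', but this is immediate from the definitions of level and of $\lev_{\leq 2k}$ recalled in the preliminaries.
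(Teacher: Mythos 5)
Your proof is correct and is essentially the paper's argument: both rest on the two facts that each vertex of $\A(S)$ lies on exactly two consecutive levels and that the complexity of $\lev_{\leq 2k}(S)$ is $O(|S|k)$, followed by averaging over the $k+1$ choices of $\tk$. The paper states this more tersely, but the content is identical.
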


\begin{proof}
  The claim follows from the observation that 
  the total size of $V_{\leq 2k}(S)$ is 
  $O(|S|k)$, and each vertex appears in
  exactly two consecutive levels of $\A(S)$. 
\end{proof}

\noindent\textbf{The problem.}
In the sequel we study the following problem.
We are given a set $P = \{p_1, \ldots, p_n\}$ of $n$ points in the plane
(in general position), 
and we would like to compute a data structure such that, 
given any set $L = \{\ell_1, \ldots, \ell_n\}$ of $n$ lines satisfying 
$p_i \in \ell_i$, for $i=1, \ldots, n$, and  any parameter $k \ge 0$,
we can efficiently construct $\lev_{\leq k}(L)$.
This is a natural generalization of the problem studied in  
Section~\ref{sec:quickCH}.
Indeed, let us apply the standard duality transformation, where
a line $l :\; y = ax + b$ is mapped to the point $l^{*} = (a,-b)$,
and a point $p = (c,d)$ is mapped to the line $p^{*} :\; y = cx -d$
(see, e.g.,~\cite[Chapter~8]{deberg2008cga}).
Then $\lev_0(L)$ in the ``primal'' plane is mapped to the 
(upper) convex  hull of the points $L^{*}$ in the ``dual'' plane.
Everett~\etal~\cite{EverettRovKr96} showed that
$\lev_{\leq k}(L)$ can be constructed in $O(n \log n + nk)$ time,
and that this time bound is worst-case optimal (see also~\cite{Chan00}).
We show:
\begin{theorem}
  \label{thm:level_main}
  Using $O(n^2)$ space and time, we can preprocess a set $P$ of $n$ points 
  in the plane, such that given a set $L$ of lines with each line incident to 
  a distinct point of $P$,
  $\lev_{\leq k}(L)$ can be computed in expected time 
  $O(n\alpha(n)(\log^* n - \log^* k) + nk)$.
  The expected running time becomes $O(n\alpha(n) + nk)$ assuming
  obliviousness.
\end{theorem}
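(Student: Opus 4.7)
I plan to prove Theorem~\ref{thm:level_main} by extending the batched randomized incremental construction of Theorem~\ref{thm:CH_main} from $\lev_{0}$ (the convex hull) to the full $(\leq k)$-level, combining it with the $O(n\log n + nk)$ algorithms of Chan~\cite{Chan00} and Everett~\etal~\cite{EverettRovKr96}, and the Clarkson-Shor random-sampling framework~\cite{CS-89}. The preprocessing dualizes $P$ to a set $P^{*}$ of $n$ lines and constructs $\A(P^{*})$ with its vertical decomposition in $O(n^2)$ time and space. By the standard duality transform, the query constraint $p_{i} \in \ell_{i}$ becomes $\ell_{i}^{*} \in p_{i}^{*}$, so the query lives in the ``points on preprocessed lines'' setting of Section~\ref{sec:quickCH}; the primal task of computing $\lev_{\leq k}(L)$ translates to building a ``$(\leq k)$-shallow'' structure on $L^{*}$, which can be maintained incrementally in a manner analogous to the convex hull of $L^{*}$.

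The query uses a gradation $L_{1} \subseteq L_{2} \subseteq \cdots \subseteq L_{t} = L$ with $|L_{1}| = \Theta(k)$ and super-exponential (iterated-log) growth, chosen so that $t = O(\log^{*}n - \log^{*}k)$. I compute $\lev_{\leq k}(L_{1})$ directly by building $\A(L_{1})$ in $O(|L_{1}|^{2}) = O(k^{2}) = O(nk)$ time (since a sample of size $O(k)$ has $\lev_{\leq k}$ covering its entire arrangement). At each subsequent stage $i$, I update $\lev_{\leq k}(L_{i-1})$ to $\lev_{\leq k}(L_{i})$ in two substeps. First, I build refined conflict lists between edges of $\lev_{\leq k}(L_{i-1})$ and the new lines in $L_{i}\setminus L_{i-1}$ by tracing the relevant convex-like boundary of the current shallow structure inside the preprocessed arrangement $\A(P^{*})$; using the $O(n\alpha(n))$ zone-complexity bound of Bern~\etal~\cite{BernEpPlYa91} and Sharir and Agarwal~\cite{SA-95}, this substep takes $O(n\alpha(n))$ expected time per stage. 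Second, I invoke the sweep of Chan~\cite{Chan00} (or Everett~\etal~\cite{EverettRovKr96}) to insert the new lines into the $(\leq k)$-level: each new line contributes amortized $O(k)$ new edges, and by Claim~\ref{clm:lin_avg} applied to the $(\leq 2k)$-level complexity, a telescoping argument forces these updates to sum to $O(nk)$ across all stages. Summing over the $t$ stages yields $O(n\alpha(n)(\log^{*}n - \log^{*}k) + nk)$ expected time. For the oblivious bound $O(n\alpha(n)+nk)$, I replace the iterated-log gradation by the geometric one of Section~\ref{sec:oblivious}, truncated to start at size $\Theta(k)$; the geometric sum of per-stage zone-tracing costs $y_{i}\alpha(y_{i})$ then collapses to $O(n\alpha(n))$.

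The main obstacle will be to rigorously set up the notion of ``refined conflict list'' for $\lev_{\leq k}$ so that Lemma~\ref{lem:conf_lists} (suitably generalized) applies uniformly across all $k$ levels at once. Unlike the convex hull case, where each new line interacts with only $O(1)$ boundary edges of the intermediate hull, here a new line can conflict with edges spread over many of the $k$ levels of the current structure. The natural conflict relation must therefore couple to the $(\leq 2k)$-level bound of Claim~\ref{clm:lin_avg}, so that the insertion cost telescopes cleanly to $O(nk)$, while at the same time the relevant traced boundaries inside $\A(P^{*})$ must be decomposable into $O(1)$ convex chains (via a Chan-style peeling) so that the $O(n\alpha(n))$ zone bound still controls each tracing step. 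Establishing both properties simultaneously---extending the Clarkson-Shor argument used in Section~\ref{sec:quickCH} from the convex hull to the shallow $(\leq k)$-level set system---is the central technical step.
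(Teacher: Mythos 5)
Your high-level skeleton (dualize $P$, build $\A(P^{*})$, run a gradation on $L$, trace a convex boundary in the preprocessed arrangement, and finish each stage with a Chan/Everett-style level computation) matches the paper's overview, but the proposal stops exactly where the paper's proof actually lives, and the step you defer is not a routine extension of Lemma~\ref{lem:conf_lists}. Two devices are missing. First, the paper never maintains $\lev_{\leq k}$ directly: it draws a single random level $\tk \in \{k,\ldots,2k\}$, maintains $\lev_{\leq \tk}(L_i)$ throughout, and only clips to the true $k$-level at the very end. This is what makes the boundary being traced usable: by Claim~\ref{clm:lin_avg} the \emph{expected} complexity of $\lev_{\tk}(L_i)$ is linear in $|L_i|$, so the semi-unbounded trapezoidal decomposition $\T_{\tk,S}$ below $\UH(\lev_{\leq\tk}(S))$ has expectedly linearly many cells. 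Working with the fixed level $k$, as you propose, gives you no such control (the worst-case $k$-level complexity is superlinear), and your claim that inserting each new line into the current $(\leq k)$-level costs ``amortized $O(k)$ new edges'' is unsupported; the paper does not insert lines at all, but recomputes the level inside each cell group from its conflict list via Cole~\etal\ plus an optimal segment-intersection algorithm, using the Everett~\etal\ fact that the clipped pieces stay inside $\lev_{\leq 2\tk-1}$ to bound the arrangement work by $O(nk)$ (Lemma~\ref{lem:per_region} and Claim~\ref{clm:aR}).

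Second, the conflict-list analysis you flag as ``the central technical step'' is indeed the crux, and the paper does not obtain it by generalizing the Clarkson--Shor bound to a shallow set system. It groups the $O(|S|)$ expected trapezoids into \emph{blocks} of $\tk/p$ contiguous cells, splits blocks into light and heavy via spanning/non-spanning lines, and proves a bespoke tail bound (Lemma~\ref{lem:boundBv}, via Hoeffding for the hypergeometric distribution together with Dey's $k$-level bound) showing that cells with conflict size $\geq \beta\tk/p^2$ contribute only $|L|e^{-\Theta(\tk/p)}$ in expectation; this is what absorbs the $c_B\log c_B + (m_B+c_B)\log^2 k$ terms into $O(n\alpha(n) + |L_i|(k+\log(|L_i|/|L_{i-1}|)))$ per iteration. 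Without this block/tail-bound machinery your cost accounting does not close. A further concrete problem is your gradation: the paper starts at $|L_1| = \lceil n/\log n\rceil$ and truncates the sequence once a subset exceeds $\lceil n/k\rceil$, precisely so that the final sampling ratio satisfies $\log(|L|/|S|) \leq \log k$; starting at $|L_1|=\Theta(k)$ with tower-like growth makes the last-stage term $|L_t|\log(|L_t|/|L_{t-1}|)$ as large as $\Theta(n\log n)$ when $k = o(\log n)$, defeating the claimed bound. (You also omit the easy fallback to Everett~\etal\ when $k \geq \log n$, and the final clipping step, though these are minor.)
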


Theorem~\ref{thm:level_main} improves the ``standard'' 
bound of $O(n \log n + nk)$ for any $k = o(\log{n})$.
We combine ideas from Chan's algorithm for constructing
$(\leq k)$-levels in arrangements of planes in $\R^3$~\cite{Chan00} 
with the technique of Everett~\etal~\cite{EverettRovKr96}.
The preprocessing phase is fairly simple, but the details
of the query processing and its analysis are more intricate.
We begin with an overview of the approach, and then describe 
the query step and its analysis in more detail.

\noindent\textbf{An overview of the algorithm.}
The main ingredients of the algorithm are as follows.

\noindent\textbf{Preprocessing.}
Compute the arrangement $\A(P^{*})$ of the 
lines dual to the points in $P$ (and produce its vertical
decomposition) in $O(n^2)$ time and storage.

\noindent\textbf{Query.} 
We are given a set of lines $L$ as above, 
and an integer $k \ge 0$.
If $k \ge \log{n}$ we use the algorithm of Everett~\etal~\cite{EverettRovKr96}
to report $\lev_{\leq k}(L)$ in $O(n \log n + nk) = O(nk)$ time.
Otherwise, we compute a gradation  
$L_1 \subseteq L_2 \subseteq \cdots \subseteq L_{\log^* n - \log^* k + 1} \subseteq L$ 
of $L$. The sizes of the subsets $L_i$ are similar to those presented in 
Section~\ref{sec:quickCH} for the dual plane, but 
as soon as the number of lines in a subset of the gradation exceeds 
$\lceil{n/k}\rceil$, 
we complete the sequence in a single step by choosing the next subset 
to be the entire set $L$. 
As in Section~\ref{sec:quickCH}, we set $|L_1| := \lceil{n/\log n}\rceil$. 

We choose a random integer $\tk \in \{k, \ldots, 2k\}$. 
Then, at the first iteration, we construct $\lev_{\leq \tk}(L_1)$ in 
$O(nk)$ time, using the algorithm in~\cite{EverettRovKr96}.
At each of the following iterations $i$, we construct $\lev_{\leq \tk}(L_{i})$
from $\lev_{\leq \tk}(L_{i-1})$ 
(at the final step, we construct $\lev_{\leq \tk}(L)$ from 
$\lev_{\leq \tk}(L_{\log^* n - \log^* k + 1})$).
As observed above,
the random choice of $\tk$ guarantees that the expected 
complexity of each $\lev_{\tk}(L_{i})$ is only linear in $|L_i|$,
\footnote{
  We use the same value of $\tk$ throughout the entire process, 
  since the expected complexity of the $\tk$-level remains linear 
  in each iteration $i$. By linearity of expectation, the overall 
  expected size of the various $\tk$-levels is 
  linear in $\sum_{i=2}^{\log^* n - \log^* k + 1} |L_i|$.}
for each $i=2, \ldots, \log^* n - \log^* k + 1$, 
which is crucial for the analysis.
Finally, we eliminate from $\lev_{\leq \tk}(L)$ all portions lying above 
the (actual) $k$-level, in order to obtain the final 
structure $\lev_{\leq k}(L)$. 

To construct $\lev_{\leq\tk}(L_i)$ from $\lev_{\leq\tk}(L_{i-1})$,
we would like to proceed as follows.
We compute $\UH(\lev_{\leq\tk }(L_{i-1}))$
and subdivide it into semi-unbounded (in the negative $y$-direction) 
trapezoidal cells.
The first goal is to find for each such cell $\Delta$ the set of lines 
$C_\Delta \subseteq L_i$ which are in \emph{conflict} with 
$\Delta$ (that is, $\Delta \cap \ell \neq \emptyset$, for each $\ell \in C_\Delta$).
This goal is achieved by mapping $\UH(\lev_{\leq\tk }(L_{i-1}))$ to 
the dual plane and walking along its zone in $\A(P^{*})$.
The dual of $\UH(\lev_{\leq\tk }(S))$ is
a concave chain $\gamma$ (the lower envelope 
of the lines dual to the vertices of $\UH(\lev_{\leq\tk }(S))$), 
where each vertex $v$ of $\UH(\lev_{\leq\tk}(S))$ is mapped to an 
edge $v^{*}$ of 
$\gamma$ and each edge $e$ is mapped to a vertex $e^{*}$ of $\gamma$. 
Moreover, 
a line $\ell \in L$ below a vertex $v$ of $\UH(\lev_{\leq\tk}(S))$
is mapped to a point $\ell^{*}$ (on some line of $P^{*}$)  
above the corresponding edge $v^{*}$ of $\gamma$. As is easily verified, 
such a line $\ell$ intersects $\UH(\lev_{\leq\tk }(S))$. 
Otherwise, if $\ell$ lies above all the vertices 
of $\UH(\lev_{\leq\tk}(S))$, then 
$\ell \cap \UH(\lev_{\leq \tk}(S)) = \emptyset$, and this implies that 
$\ell^{*}$ lies below $\gamma$ in the dual plane. 
See Figure~\ref{fig:lines_chain}(a)--(b).

Having the lists $C_\Delta$ at hand, we construct for each $\Delta$ 
the structure
$\lev_{\leq \tk}(L_i)$ clipped to $\Delta$ by
(i) constructing $\lev_{\tk}(C_\Delta)$ (clipped to $\Delta$); 
(ii) clipping each line $\ell \in C_\Delta$ to its portion that lies below
$\UH(\lev_{\tk }(C_\Delta) \cap \Delta)$; (iii) constructing the 
arrangement of these portions within $\Delta$ 
(as observed in \cite{EverettRovKr96}, 
the actual level of these portions in $\A(L_i)$ does not exceed 
$2 \tk - 1$); and (iv) eliminating from the arrangement just computed
all portions lying above $\lev_{\tk}(C_\Delta) \cap \Delta$.
Finally, we glue the resulting structures together and report 
$\lev_{\leq \tk}(L_i)$. 

However, it would be too expensive to process each conflict list
$C_\Delta$ individually. Therefore, a crucial ingredient of the algorithm 
is to consider \emph{blocks} instead of just individual cells. 
Specifically, we gather contiguous cells into blocks and process
them all together. 
This partition is the key to reducing the number of cells considered 
in the update step; see Figure~\ref{fig:conf_list_block}.  The bulk
of the analysis lies in a careful balancing between the block sizes and their overall number,
and in particular showing that blocks with large conflict lists are scarce.

\subsection{Query Processing}

We now describe the query process and its analysis in more detail.
We first follow a gradation as described in the overview, and then
proceed to the update step.

\noindent\textbf{The update step.}
From now on we fix an iteration $i > 1$, and, with a slight
abuse of notation, put $S := L_{i-1}$ and $L := L_{i}$. 
Let $p := |S|/|L|$.
By definition, $S$ is a random sample of $L$ of size 
$\lceil{n/\log^{(i-1)} n}\rceil = p|L|$.
Given $\lev_{\leq\tk}(S)$, we first construct $\UH(\lev_{\leq\tk }(S))$. 

\begin{claim}
  \label{clm:UH_linear_time}
  The overall expected time to construct 
  $\UH(\lev_{\leq\tk }(S))$ is $O(|S|)$.
\end{claim}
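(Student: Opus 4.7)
My plan hinges on the observation that $\UH(\lev_{\leq\tk}(S))$ depends only on the top boundary of the region $\lev_{\leq\tk}(S)$, namely the $x$-monotone chain $\lev_{\tk}(S)$. Indeed, for every $x$-coordinate, the highest point of $\lev_{\leq\tk}(S)$ lies on the $\tk$-level, so a vertex of $\lev_{\leq\tk}(S)$ can appear on the upper hull only if it is a vertex in $V_{\tk}(S)$. Consequently $\UH(\lev_{\leq\tk}(S)) = \UH(V_{\tk}(S))$, and it suffices to compute the upper hull of the vertices of $\lev_{\tk}(S)$.

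First I would extract the chain $\lev_{\tk}(S)$ from the representation of $\lev_{\leq\tk}(S)$: using standard pointers of the planar subdivision, we simply walk along its topmost boundary from the leftmost unbounded edge to the rightmost one. Because the boundary traversal visits each edge and vertex of $\lev_{\tk}(S)$ a constant number of times, this step runs in $O(1 + |V_{\tk}(S)|)$ time and delivers the vertices of the chain in $x$-sorted order.

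Next, since the output list is already $x$-sorted, a single stack-based pass (Graham's scan with the sorting phase omitted) produces $\UH(V_{\tk}(S))$ in $O(|V_{\tk}(S)|)$ additional time. Taking expectation over the random choice of $\tk \in \{k,\ldots,2k\}$ and applying Claim~\ref{clm:lin_avg} to the set $S$ gives $\EX_{\tk}[|V_{\tk}(S)|] \ll |S|$, so the total expected running time is $O(|S|)$, as required.

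This claim is a straightforward application of the averaging principle baked into Claim~\ref{clm:lin_avg}, and I do not anticipate any real obstacle. The only conceptual subtlety worth flagging is that the linear bound is an expectation over $\tk$ rather than a worst-case deterministic bound; a fixed level $\lev_{\tk}(S)$ can be of superlinear complexity (up to $O(|S|\tk^{1/3})$), and the algorithm crucially exploits the fact that $\tk$ was drawn uniformly at random at the start of the query phase and then reused across all iterations, so the same expectation argument applies verbatim whenever the claim is invoked.
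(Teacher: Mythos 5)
Your proof is correct and matches the paper's argument: both extract the $x$-ordered vertex sequence of $\lev_{\tk}(S)$ by a walk in the planar-subdivision (DCEL) representation, invoke Claim~\ref{clm:lin_avg} to bound its expected size by $O(|S|)$ over the random choice of $\tk$, and finish with a linear-time Graham scan on the already-sorted vertices. Your explicit remark that $\UH(\lev_{\leq\tk}(S))$ is determined by the vertices of the $\tk$-level is only implicit in the paper, but it is the same underlying observation.
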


\begin{proof}
  Using easy manipulations on the DCEL representing $\lev_{\leq\tk}(S)$,
  we can first locate a vertex $v$ of $\lev_{\tk}(S)$, and then proceed to its neighboring
  topmost vertex (say, to its left) by walking along its corresponding adjacent edge.
  We then continue progressing in this manner to the left.
  The vertices to the right of $v$ are explored analogously.
  Thus we can extract the sequence of vertices (and edges) along 
  $\lev_{\tk}(S)$, ordered from left to right.
  By Claim~\ref{clm:lin_avg}, its expected size is $O(|S|)$.
  Then we use Graham's scan on the resulting set of vertices.
\end{proof}

Next, we shoot vertical rays from each vertex of the hull 
$\UH(\lev_{\leq\tk }(S))$ 
in the negative $y$-direction. This gives a collection $\T_{\tk,S}$ of 
semi-unbounded trapezoidal cells covering $\UH(\lev_{\leq\tk}(S))$, 
and hence also $\lev_{\leq\tk}(L)$,
as is easily verified (see, e.g.,~\cite{Mat-92} for similar arguments). 
We group the cells in $\T_{\tk,S}$ into $O(|\T_{\tk,S}|(p/\tk))$ 
semi-unbounded vertical strips, each of which consists of 
$\tk/p$ contiguous cells. Such a vertical strip is called a
\emph{block}. 
Every block is bounded  by a convex chain from above, and by two 
vertical walls, one to its left and the other to its right. 
Let $\mathcal{B}$ be the set of all blocks.

We say that a line $\ell \in L$ is \emph{in conflict} with 
a cell $\Delta \in \T_{\tk,S}$, if $\Delta \cap \ell \neq \emptyset$. 
The conflict list $C_\Delta$ 
is then the set of all lines $\ell \in L$ in conflict 
with $\Delta$, and we put $c_\Delta := |C_\Delta|$. We similarly
define conflict lists $C_B$ and conflict sizes $c_B$ for each block
$B \in \mathcal{B}$.
Our next goal is to 
determine the conflict lists $C_B$ for each block. 

\begin{lemma}
  \label{lem:findconflicts}
  We can construct the conflict lists $C_B$, $B \in \mathcal{B}$,
  in overall time 
  \[ 
    O\left(n\alpha(n) + |V_{\tk}(S)| + \sum_{B \in \mathcal{B}} c_B\right).
  \]
\end{lemma}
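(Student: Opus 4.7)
The plan is to combine the zone-tracing framework from the algorithm overview with a block-level walk exploiting the concavity of the upper hull. As a setup, I will extract $\UH(\lev_{\leq\tk}(S))$ in expected time $O(|V_{\tk}(S)|)$ by Claim~\ref{clm:UH_linear_time}, partition the cells of $\T_{\tk,S}$ into contiguous blocks of $\tk/p$ cells each (maintaining cross-pointers from each hull vertex to its block), and dualize each hull vertex $v$ into its line $v^*$, obtaining the concave chain $\gamma$ as the lower envelope of these $O(|V_{\tk}(S)|)$ lines. Tracing $\gamma$ through $\A(P^*)$ via the precomputed vertical decomposition then takes $O(n\alpha(n))$ time by the zone theorem for convex curves in planar line arrangements~\cite{BernEpPlYa91}.

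During the trace I will classify each line $\ell_i \in L$ by the position of its dual point $\ell_i^*$ (which lies on a specific line $p_i^* \in P^*$) relative to $\gamma$. Since $\gamma$ is concave and $p_i^*$ is a line, they meet in at most two points; these intersections, recorded as the trace crosses them, split $p_i^*$ into pieces alternately above and below $\gamma$ (one comparison at $x=0$ handles the no-intersection case). The side containing $\ell_i^*$ is then read off in $O(1)$ from its $x$-coordinate. By the duality recalled in the overview, lines with $\ell_i^*$ below $\gamma$ lie above $\UH$ in the primal and conflict with no block, so they are discarded. For each remaining $\ell_i$, I will associate it with the hull vertex $v$ whose dual edge $v^*$ lies on $\gamma$ at $x(\ell_i^*)$: then $\ell_i^*$ is above $v^*$, i.e., $\ell_i$ passes below $v$ in the primal, so the block $B_{\ell_i}$ containing $v$ is in conflict with $\ell_i$ and serves as a valid starting block.

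Finally, for each relevant $\ell_i$ I will perform an outward walk through the block sequence from $B_{\ell_i}$, inserting $\ell_i$ into $C_B$ at every visited block. The crucial observation is that $\UH - \ell_i$ is concave, so $\{x : \UH(x) > \ell_i(x)\}$ is a single (possibly unbounded) interval. Walking rightward from $B_{\ell_i}$, we may safely continue past the current block iff the sign of $\UH - \ell_i$ at the block's right wall is positive---an $O(1)$ check per step---and symmetrically for the leftward walk. Since each step contributes exactly one unit to some $c_B$, the walks cost $O(\sum_{B \in \mathcal{B}} c_B)$ in total, and summing with the setup and trace costs yields the claimed bound.

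\textbf{Main obstacle.} The most delicate step is the per-line association to the active $\gamma$-edge during the trace, which must be carried out in total $O(n\alpha(n))$ time rather than paying a $\log|V_{\tk}(S)|$ factor per line. I expect to handle this by amortizing across zone faces: within each visited face $f$ of $\A(P^*)$, the $\gamma$-edges inside $f$ and the bounding $P^*$-lines whose dual points fall in $f$'s $x$-range can be processed jointly so that the work is $O(|\partial f|+k_f)$ per face, which sums to $O(n\alpha(n))$ over the zone.
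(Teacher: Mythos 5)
Your overall route coincides with the paper's: dualize $\UH(\lev_{\leq\tk}(S))$ to the concave chain $\gamma$, trace its zone in $\A(P^{*})$ in $O(n\alpha(n))$ time using the precomputed vertical decomposition, classify each $\ell^{*}$ against $\gamma$, pick one starting cell/block per conflicting line, and then charge a bidirectional block walk to $\sum_{B} c_B$. Your justification of the walk (concavity of $\UH-\ell$ makes $\{x:\UH(x)\ge\ell(x)\}$ an interval, so an $O(1)$ sign test at each shared wall is a correct stopping rule) is a clean, slightly more explicit version of the paper's wall-crossing test.

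The gap is precisely at the step you flag, and your proposed repair does not close it. You insist on associating each conflicting $\ell_i$ with the $\gamma$-edge \emph{vertically below} $\ell_i^{*}$, and you hope to find it by amortizing over zone faces, handling at each face $f$ ``the bounding $P^{*}$-lines whose dual points fall in $f$'s $x$-range.'' But the zone face containing the $\gamma$-edge below $\ell_i^{*}$ need not be incident to $p_i^{*}$ at all: arbitrarily many other lines of $P^{*}$ may separate $\ell_i^{*}$ from $\gamma$ at that abscissa, so no face you visit carries a local handle to $\ell_i^{*}$, and you are back to locating $x(\ell_i^{*})$ among the $\Theta(|V_{\tk}(S)|)$ vertical slabs of $\gamma$. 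Since the $x$-coordinates of the points $\ell_i^{*}$ are the slopes of the \emph{query} lines, they are unknown at preprocessing time, so you can neither presort them nor afford to sort or binary-search them within the stated bound. The way out (and what the paper does, by mirroring Section~\ref{sec:quickCH}) is to abandon the vertically-below edge: any hull vertex $v$ with $\ell_i$ passing below $v$ yields a valid start. If $p_i^{*}$ crosses $\gamma$, take the $\gamma$-edge crossed at the endpoint of the piece of $p_i^{*}$ containing $\ell_i^{*}$; that edge lies on the supporting line $v^{*}$ of a hull vertex $v$, and since $p_i^{*}$ meets $v^{*}$ only at that crossing, $\ell_i^{*}$ lies above $v^{*}$, i.e.\ $\ell_i$ passes below $v$ --- this uses only the crossings your trace already records, in $O(1)$ per line. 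If $p_i^{*}$ has no crossing and lies above $\gamma$ (primally, $p_i$ lies below the hull), locate the vertex of $\gamma$ whose two incident edge slopes bracket the slope of $p_i^{*}$; the slopes of the lines of $P^{*}$ are the $x$-coordinates of $P$, known at preprocessing, so a presorted list merged with the slope-sorted edges of $\gamma$ handles all such lines in linear time (the analogue of the $L_2$ treatment in Section~\ref{sec:quickCH}), and a constant-time comparison of $x(\ell_i^{*})$ with that vertex's abscissa identifies an incident edge whose supporting line lies below $\ell_i^{*}$, hence a hull vertex below which $\ell_i$ passes.
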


\begin{proof}
First, we determine for each line $\ell \in L$ one
trapezoid $\Delta_\ell \in \T_{\tk,S}$ such that
$\ell \in C_{\Delta_\ell}$, if such a $\Delta_\ell$ exists.
This is done by a walk in the dual plane, as described
in the overview above and illustrated in Figure~\ref{fig:lines_chain}.
Specifically, we dualize $\UH(\lev_{\leq\tk}(S))$ to a 
concave chain $\gamma$.
\begin{figure}
  \begin{center}
    \includegraphics{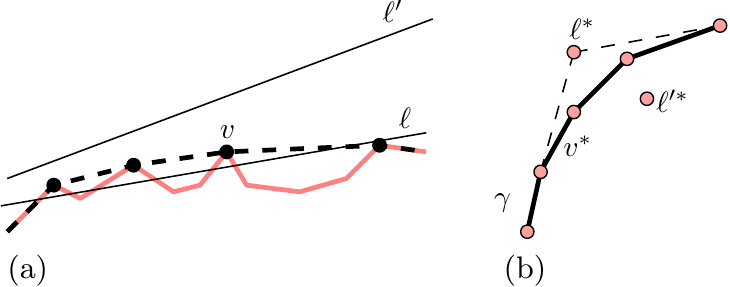}
  \end{center}
  \caption{
      (a) The $\tk$-level of $\A(S)$ is depicted by the
      lightly-shaded polygonal line in the figure, and
      $\UH(\lev_{\leq \tk}(S))$ is depicted by the dashed line.
      The line $\ell'$ passes above $\UH(\lev_{\leq \tk}(S))$, where
      $\ell$ passes below $v$ and thus meets $\UH(\lev_{\leq \tk}(S))$.
      (b) The dual scene of (a).
      The concave chain $\gamma$ is the dual of $\UH(\lev_{\leq \tk}(S))$.
      The line $\ell$ is mapped to the point $\ell^{*}$, where the pair
      of the dashed lines depict the visibility lines of $\ell^{*}$ to $\gamma$.
      The line $\ell'$ is mapped to the point $\ell'^{*}$ lying below $\gamma$.
  } 
  \label{fig:lines_chain}
\end{figure}
Using a similar technique as in Section~\ref{sec:quickCH}, we walk along the 
zone of $\gamma$ in $\A(P^{*})$ in order to determine, for each point $\ell^*$ 
corresponding to a line $\ell \in L$, its orientation with respect to $\gamma$.  
When $\ell^{*}$ lies above $\gamma$, we find an edge 
$v^{*}_{\ell^*}$ of $\gamma$ that is visible from $\ell^{*}$.
Using the corresponding vertex $v_\ell$ in the primal plane, we can determine
a cell $\Delta_\ell$ that is intersected by $\ell$.

Next, we determine for each such line $\ell$ a block $B$ that conflicts
with it, namely the block that contains $\Delta_\ell$. We then find all
blocks $B'$ with $\ell \in C_{B'}$ through a bidirectional walk from $B$. 
That is, we can determine if $\ell$ intersects the next block by checking whether
$\ell$ intersects any of its walls (otherwise, $\ell$ intersects its convex chain).
See Figure~\ref{fig:conf_list_block}.

The bound on the running time now follows using similar considerations
as in Section~\ref{sec:quickCH}.
\end{proof}

\begin{figure}
  \begin{center}
    \includegraphics{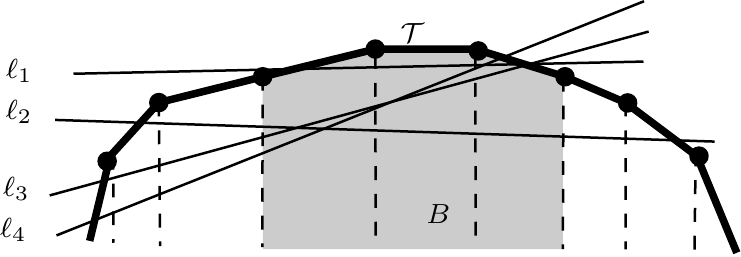}
  \end{center}
  \caption{
      The block $B$ is depicted by the lightly-shaded region.
      The line $\ell_1$ does not intersect any neighboring block, 
      whereas $\ell_3$ and $\ell_4$ also meet the left neighbor of $B$. 
      The line $\ell_2$ meets both neighbors.
  } 
  \label{fig:conf_list_block}
\end{figure}

Our next goal is to determine the $(\leq \tk)$-level clipped to $B$,
for each $B \in \mathcal{B}$.  To this end, we use a variant of the 
technique of Everett~\etal~\cite{EverettRovKr96}. 

\begin{lemma}
  \label{lem:per_region}
  Let $B \in \mathcal{B}$. The $(\leq \tk)$-level of $L$ clipped to $B$ can 
  be constructed in time $O(c_B\log{c_B} + (m_B + c_B) \log^2 k + a_B)$,
  where $c_B := |C_B|$, $m_B := |V_{\tk}(C_B) \cap B|$,
  and $a_B$ is the number of vertices of $\A(L)$ below 
  $\UH(\lev_{\tk}(C_R) \cap B)$.
\end{lemma}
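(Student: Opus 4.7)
I would realize the four-stage construction outlined in the overview preceding the statement, applied inside the block $B$ with conflict list $C_B$. A useful initial observation is that only the lines in $C_B$ matter for $\lev_{\leq \tk}(L)\cap B$: any line $\ell\in L\setminus C_B$ lies entirely above the top chain of $B$ throughout $B$'s vertical strip (otherwise it would enter the semi-unbounded region $B$), so $\ell$ lies above every $p\in B$ and contributes $0$ to the level at each such $p$. Hence it suffices to build $\lev_{\leq\tk}(C_B)\cap B$.

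\textbf{Steps 1--2 ($\tk$-level, its upper hull, and clipping).} I would first sort $C_B$ by slope in $O(c_B\log c_B)$ time. Then run an output-sensitive $\tk$-level algorithm, adapted from Everett~\etal~\cite{EverettRovKr96} by restricting their recursive sampling to the vertical strip of $B$, to produce $\lev_{\tk}(C_B)\cap B$ in $O((c_B+m_B)\log^2 k)$ time; the $\log^2 k$ factor is the product of the recursion depth and the per-step sorting overhead. A linear sweep over the $m_B$ output vertices extracts the concave chain $\UH(\lev_{\tk}(C_B)\cap B)$. For each $\ell\in C_B$, a binary search over this concave chain locates the (possibly empty) maximal interval on which $\ell$ lies below it, at cost $O(\log m_B)$ per line, absorbed into the earlier bound.

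\textbf{Steps 3--4 (arrangement and trimming).} With the clipped sub-segments in hand, I would build their arrangement inside $B$ using a Chazelle--Edelsbrunner-style segment-arrangement construction, in $O(c_B\log c_B + a_B)$ time. The bound on the vertex count comes from the observation of Everett~\etal~\cite{EverettRovKr96} recalled in the overview: every vertex of this sub-arrangement lies at level at most $2\tk-1$ in $\A(C_B)$ and below $\UH(\lev_{\tk}(C_B)\cap B)$, and hence (adding $O(c_B)$ boundary crossings) totals $a_B+O(c_B)$. Finally, I would walk $\lev_{\tk}(C_B)\cap B$ produced in Step~1 and, via cross-pointers into the sub-arrangement, discard every cell lying strictly above the $\tk$-level, at cost $O(a_B+m_B)$. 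Summing the four contributions yields
\[
O(c_B\log c_B) \;+\; O((c_B+m_B)\log^2 k) \;+\; O(c_B\log c_B + a_B) \;+\; O(a_B+m_B),
\]
which matches the stated bound.

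\textbf{Principal difficulty.} The crux is Step~1: producing $\lev_{\tk}(C_B)\cap B$ together with its upper hull in the claimed time is not immediate from any black-box $k$-level algorithm. The $\log^2 k$ factor forces me to exploit the outer guarantee $\tk = O(\log n)$ (already ensured by the gradation of the parent algorithm) and to carefully localize the recursive sampling of Everett~\etal~\cite{EverettRovKr96} to the strip of $B$, so that the per-vertex cost inside the recursion depends only on $k$ and not on $c_B$. Once this step is in place, Steps~2--4 reduce to standard binary-search, segment-arrangement, and zone-walking arguments.
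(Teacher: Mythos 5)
There is a genuine gap, and you have in fact put your finger on it yourself: your Step~1 — producing $\lev_{\tk}(C_B)\cap B$ in $O(c_B\log c_B + (m_B+c_B)\log^2 \tk)$ time — is exactly the load-bearing step, and your proposal does not actually establish it. The paper does not adapt the recursive machinery of Everett~\etal~\cite{EverettRovKr96} at all for this step; that algorithm computes the whole $(\leq k)$-level in $O(n\log n + nk)$ time and does not obviously yield a single $k$-level with an output-sensitive $\log^2 k$ overhead, so ``localizing its recursive sampling to the strip of $B$'' is a speculative construction you would have to design and analyze from scratch. The missing ingredient is the algorithm of Cole~\etal~\cite{ColeShYa87}, which computes the $\tk$-level as an $x$-monotone chain in precisely $O(c_B\log c_B + (m_B+c_B)\log^2\tk)$ time; the restriction to the block $B$ is handled simply by starting its rotational sweep (in the dual) at the correct orientation, found by a linear-time selection on the intersections of $C_B$ with the left wall of $B$. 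Note also that your appeal to the outer guarantee $\tk=O(\log n)$ is neither needed in the paper's argument nor plausibly sufficient to rescue an Everett-style adaptation, so the lemma as you argue it remains unproved at its core.

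The remainder of your plan does track the paper's proof: clip each $\ell\in C_B$ to its portion below $\UH(\lev_{\tk}(C_B)\cap B)$ (the paper does this with a single $O(m_B)$ walk along the chain $\zeta$, recording for each line the first and last vertices of $\zeta$ incident to it, rather than your per-line binary search — but your $O(c_B\log m_B)$ cost is absorbed by $O(c_B\log c_B)$, so that variation is harmless), then build the arrangement of the clipped pieces with an optimal segment-intersection algorithm in $O(c_B\log c_B + a_B)$ time and trim what lies above the level, using the observation of Everett~\etal\ that all clipped portions stay within $\lev_{\leq 2\tk-1}(C_B)\cap B$. So the accounting of the last three stages is fine; the lemma fails only, but decisively, at the $\tk$-level construction step.
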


\begin{proof}
  We apply the algorithm of Cole~\etal~\cite{ColeShYa87} in order to construct
  $\lev_{\tk}(C_B) \cap B$ in time $O(c_B\log{c_B} + (m_B  + c_B)\log^2 \tk)$.
  Note that this algorithm returns $\lev_{\tk}(C_B) \cap B$ as an $x$-monotone 
  polygonal chain $\zeta$ ordered from left to right.\footnote{The algorithm of 
    Cole~\etal~proceeds with a rotational sweep in the dual plane
    that keeps $\tk$ points of the input to the left of the sweep-line.
    In order to find only those vertices of the $\tk$-level which lie 
    inside $B$, we need to identify the appropriate initial orientation for 
    this line, but this is easily done by inspecting the intersection of 
    $C_B$ with the left boundary of $B$ and using a linear time selection  
    algorithm~\cite[Chapter~9]{CormenLeRiSt09}.}
  Next, we determine for each line $\ell \in C_B$ its first and last 
  intersections $w_1$, $w_2$ with $\zeta$ (if they exist). Clearly, the 
  portion of $\ell$ below $\UH(\zeta)$ is either (i) the line segment 
  $w_1 w_2$ (if both intersections exist); (ii) a ray with an endpoint 
  at $w_1$  (if $w_1$ is the only intersection with $\UH(\zeta)$) 
  or (iii) the full line $\ell$ clipped to $B$ (if it lies fully below $\zeta$).

  These intersections can easily be determined in $O(m_B)$ time by 
  walking along $\zeta$ and recording for each line $\ell$ the first and 
  last vertices of $\zeta$ that are incident to $\ell$ (if they exist); 
  at the representation of $\zeta$, 
  we also store the incident lines within each vertex.  
  A line that is not encountered during this process, 
  does not meet $\zeta$, and we can easily check whether it lies below $\zeta$.
  As observed above, each of these portions (clipped to $B$) is either the 
  (full) line $\ell$, a ray, 
  or a line segment. 
  Let $C_B'$ be the resulting set of these portions;
  by construction, $c_B' := |C_B'| \le c_B$. 
  Having this collection at hand, the computation of 
  $\lev_{\leq \tk}(C_B) \cap B$ 
  is almost straightforward. Indeed, we use an optimal line segment 
  intersection 
  algorithm~\cite{ChazelleEd92,CS-89} in order to
  compute the arrangement of $C_B'$ in time proportional to
  $c_B' \log c_B' + a_B \ll c_B \log c_B + a_B$,
  where $a_B$ is the number of intersections between the elements of $C_B'$.
  Note that some of these intersections may lie above the $\tk$-level, as they 
  are only guaranteed to be contained in $\UH(\zeta)$. Thus, at the final 
  step of the construction we eliminate such portions of the arrangement.
  This produces $\lev_{\leq \tk}(C_B) \cap B = \lev_{\leq \tk}(L) \cap B$.
  A key observation is the fact that all these 
  portions are actually contained in $\lev_{\leq (2\tk-1)}(C_B) \cap B$---see 
  below. 
\end{proof}

Finally, we glue all the resulting structures together and report 
$\lev_{\leq \tk}(L)$.  

\subsection{The Analysis}

We phrase our analysis below for a random subset $S$ of $L$
with $|S| = p|L|$, for some $p \in (0,1)$, as the value of $p$
varies at each iteration of the algorithm (as well as the final step).
We begin with the following key lemma that bounds the total size of
the large conflict sets. The proof is postponed
to \ref{app:levels}:
\begin{lemma}
  \label{lem:boundBv}
  Let $\tk$, $L$, $S$, $\T_{\tk,S}$, $C_\Delta$, $c_\Delta$, $p$ be defined as 
  above. Then, for any sufficiently large constant $\beta \geq 1$, we have
  \begin{equation}
    \label{eq:delta_conflist}
    \EX_{S}\Biggl[
      \sum_{\substack{\Delta \in \T_{\tk,S} \\ c_{\Delta} \ge \beta\tk/p^2 }} 
      c_{\Delta} (\log c_{\Delta} + \log^2 \tk + \log{(1/p)} )
      \Biggr]
    \leq  |L|e^{-\Theta(\beta)(\tk/p)}.
  \end{equation}
\end{lemma}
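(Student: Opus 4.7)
The plan is to invoke the standard Clarkson--Shor exponential decay framework, adapted to $(\leq \tk)$-levels. First I rewrite the left hand side of~\eqref{eq:delta_conflist} via linearity of expectation as a sum over all \emph{potential} trapezoids $\Delta$ (each defined by at most four lines of $L$: two per top vertex of $\Delta$), namely
\[
\sum_\Delta \Pr[\Delta \in \T_{\tk,S}] \cdot c_\Delta \cdot \bigl(\log c_\Delta + \log^2 \tk + \log(1/p)\bigr) \cdot \mathbf{1}\!\left[c_\Delta \ge \beta \tk/p^2\right].
\]
The core ingredient will be a decay bound of the form $\Pr[\Delta \in \T_{\tk,S}] \le Q(\Delta)\,(1-p)^{c_\Delta}$, where $Q(\Delta)$ captures the probability that the defining lines of $\Delta$ lie in $S$ and that its top vertices sit at level at most $\tk$ in $S$, while the $(1-p)^{c_\Delta}$ factor reflects the fact that any sampled conflict line would insert a new vertex of $\UH(\lev_{\leq\tk}(S))$ inside the $x$-range of $\Delta$, destroying $\Delta$ as a single trapezoid of $\T_{\tk,S}$. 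The normalisation $\sum_\Delta Q(\Delta) = \EX_S[|\T_{\tk,S}|] = O(|S|) = O(p|L|)$ follows from Claim~\ref{clm:lin_avg}.

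Once this decay bound is in hand, I group potential trapezoids into geometric buckets according to conflict size, $c_\Delta \in [2^j\beta\tk/p^2,\,2^{j+1}\beta\tk/p^2)$ for $j = 0, 1, 2, \ldots$. Using the decay bound and the normalisation above, the contribution of bucket $j$ to the sum is at most
\[
O\!\left(p|L| \cdot e^{-p\cdot 2^j\beta\tk/p^2}\cdot 2^{j+1}\beta\tk/p^2 \cdot \bigl(\log(2^j\beta\tk/p^2) + \log^2\tk + \log(1/p)\bigr)\right),
\]
and since $p\cdot 2^j\beta\tk/p^2 = 2^j\beta\tk/p$, the exponential $e^{-2^j\beta\tk/p}$ swallows all the polynomial and polylogarithmic prefactors, provided $\beta$ is chosen sufficiently large. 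The sum over $j \ge 0$ is then a convergent geometric series dominated by $j=0$, yielding the claimed bound $|L| \cdot e^{-\Theta(\beta)(\tk/p)}$.

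The main obstacle is rigorously establishing the exponential factor $(1-p)^{c_\Delta}$. For $0$-levels (convex hulls), this is textbook: any sampled line crossing $\Delta$ clearly destroys it. For $\tk$-levels the argument is more delicate, because a conflict line $\ell \in C_\Delta$ sampled into $S$ might cross $\Delta$ entirely below the $\tk$-level of $S$, thus leaving $\UH(\lev_{\leq\tk}(S))$ untouched. I expect to handle this either by refining the definition of conflict (keeping only lines whose intersection with $\Delta$ lies above the $\tk$-level of $S$, and showing via an averaging argument using the random choice of $\tk \in \{k,\ldots,2k\}$ that a constant fraction of $C_\Delta$ qualifies), or by directly invoking a tailored exponential decay lemma for $(\leq k)$-levels as developed by Clarkson--Shor~\cite{CS-89}, Chan~\cite{Chan00}, or~\cite[Chapter~6]{SA-95}.
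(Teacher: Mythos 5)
Your outer machinery (bucketing by conflict size, an exponential-in-$pc_\Delta$ decay absorbing the polylogarithmic weights, a geometric sum over buckets) is fine, but the heart of the lemma is exactly the decay estimate, and that is the step you have not proved. As you note yourself, the justification ``any sampled conflict line inserts a new hull vertex inside the $x$-range of $\Delta$'' is false for $\tk$-levels: a line of $C_\Delta$ that enters $S$ but crosses $\Delta$ below $\lev_{\tk}(S)$ leaves $\UH(\lev_{\leq\tk}(S))$ unchanged, so no bound of the form $\Pr[\Delta \in \T_{\tk,S}] \le Q(\Delta)(1-p)^{c_\Delta}$ follows from that reasoning. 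Neither fallback you mention closes the gap: the random choice of $\tk \in \{k,\ldots,2k\}$ is used in the paper only to get linear expected level complexity (Claim~\ref{clm:lin_avg}), not any decay, and none of the cited sources states a lemma in the precise form needed here (a sum restricted to cells with $c_\Delta \ge \beta\tk/p^2$ of $c_\Delta(\log c_\Delta+\log^2\tk+\log(1/p))$, for this particular trapezoidal covering, bounded by $|L|e^{-\Theta(\beta)\tk/p}$). The correct mechanism, and the one the paper uses, is a \emph{lower-tail} argument rather than a ``no conflict line sampled'' argument: if $\Delta \in \T_{\tk,S}$, its two top vertices $v_1,v_2$ lie on $\lev_{\tk}(S)$, every line of $C_\Delta$ passes below $v_1$ or $v_2$, so $c_\Delta \le 2\max\{c_{v_1},c_{v_2}\}$, and for a vertex $v$ of $\A(L)$ with $c_v$ lines below it, $v \in V_{\tk}(S)$ forces both defining lines into $S$ (probability $\le p^2$) and forces only $\tk-1$ or $\tk$ of the $c_v$ lines under $v$ into $S$ — an exponentially unlikely hypergeometric lower-tail event when $c_v \ge \beta\tk/p^2$, bounded via Hoeffding by $e^{-\Theta(p c_v)}$. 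The final summation is then over vertices of $\A(L)$ grouped by level, using Dey's bound $O(|L|c^{1/3})$ on the number of level-$c$ vertices and an integral. So the decay you posit has roughly the right magnitude, but its proof is ``few conflict lines sampled,'' not ``none sampled,'' and supplying it is the whole content of the lemma.

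A second, smaller flaw is the normalization $\sum_\Delta Q(\Delta) = \EX_S[|\T_{\tk,S}|]$. We have $\EX_S[|\T_{\tk,S}|] = \sum_\Delta \Pr[\Delta \in \T_{\tk,S}]$, and your $Q(\Delta)$ encodes only necessary conditions (defining lines sampled, top vertices shallow in $S$) while ignoring that the two top vertices must be \emph{consecutive} on $\UH(\lev_{\leq\tk}(S))$; summed over all $\Theta(n^4)$ potential trapezoids, $\sum_\Delta Q(\Delta)$ can far exceed $O(p|L|)$, so the per-bucket bound does not follow as stated. The paper avoids enumerating potential trapezoids altogether by charging each cell to its top vertices and working with vertices of $\A(L)$, which is also what makes Dey's level bound directly usable; if you want to keep a Clarkson--Shor-style configuration-space formulation, you would need to re-derive both the decay and the counting for this shallow setting rather than quote them.
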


\noindent\textbf{Remarks.}
\noindent (1)
The bound in Lemma~\ref{lem:boundBv}  holds for any integer 
$\tk \in \{k\ldots 2k\}$. In particular, the analysis does not assume
a linear complexity bound on any of the levels of $\A(S)$ (and $\A(L)$);
see \ref{app:levels} for further details.
\newline\noindent (2)
It is easy to verify that the bound in Lemma~\ref{lem:boundBv} can be rewritten
when we apply the summation over all blocks.
That is,
\begin{equation}
  \label{eq:heavy_blocks}
  \EX_{S}\Biggl[ 
    \sum_{\substack{B \in \mathcal{B} \\ \Delta \in B , c_{\Delta} \ge \beta\tk/p^2 }} 
    c_{\Delta} (\log c_{\Delta} + \log^2 \tk + \log{(1/p)} ) \Biggr] 
  \leq  
  |L|e^{-\Theta(\beta)(\tk/p)} .
\end{equation}

\noindent\textbf{Bounding the expected running time.}
Adding the bounds in Lemmas~\ref{lem:findconflicts} and \ref{lem:per_region},
the running time to construct $\lev_{\leq \tk}(L)$ from
$\lev_{\leq \tk}(S)$ is asymptotically upper-bounded by
\begin{equation}\label{equ:runtime_bound}
n\alpha(n) + |V_{\tk}(S)| +
\sum_{B \in \mathcal{B}} c_B (\log c_B + \log^2 k) + m_B \log^2 k + a_B
\end{equation}
We bound each summand in turn. By Claim~\ref{clm:lin_avg}, we have
$\EX_{\tk}[|V_{\tk}(S)|] \ll |S| \leq |L|$.

\begin{claim}
  \label{clm:cRlogcR}
  We have:
  \[
  \EX_{S,\tk}\left[\sum_{B \in \mathcal{B}} c_B (\log{c_B}+\log^2\tk)\right] 
  \ll  
  |L|\left(\log(1/p) + \log^2 k\right).
  \]
\end{claim}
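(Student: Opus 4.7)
The plan is to split the sum over blocks according to the magnitude of $c_B$ and reduce the heavy-block contribution to the cell-level tail bound of Lemma~\ref{lem:boundBv}. The key preliminary step is the estimate $\EX_{S,\tk}[\sum_{B \in \mathcal{B}} c_B] = O(|L|)$. Because $\UH(\lev_{\leq\tk}(S))$ is concave when viewed from above, each line $\ell \in L$ lies below it on a single $x$-interval $I_\ell$, so $\ell$ contributes to $C_B$ only for those blocks whose $x$-range meets $I_\ell$. Since cells are grouped into blocks of $|B| = \tk/p$ consecutive cells, $\ell$ appears in $O(t_\ell \, p/\tk + 1)$ blocks, where $t_\ell$ counts the vertices of $\UH(\lev_{\leq\tk}(S))$ inside $I_\ell$. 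Summing over $\ell$ gives $\sum_B c_B = O((p/\tk)\sum_\ell t_\ell + |L|)$, and a Clarkson--Shor type estimate combined with Claim~\ref{clm:lin_avg} yields $\EX[\sum_\ell t_\ell] = O(|L|\tk)$, so $\EX[\sum_B c_B] = O(|L|)$.

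With this in hand, fix a sufficiently large constant $\beta$ and set $T \eqdef \beta \tk^2 / p^3$. Decompose $\sum_B c_B \log c_B$ into the typical part ($c_B \leq T$) and the atypical part ($c_B > T$). For the typical part, $\log c_B \leq \log T = O(\log k + \log(1/p))$, so it is at most $\log T \cdot \sum_B c_B$, with expectation $O(|L|(\log k + \log(1/p)))$. For the atypical part, let $\Delta_B^\ast \in B$ be the cell maximizing $c_\Delta$; since $c_B \leq |B| \, c_{\Delta_B^\ast} = (\tk/p) c_{\Delta_B^\ast}$, the condition $c_B > T$ forces $c_{\Delta_B^\ast} \geq \beta \tk/p^2$, exactly the threshold of Lemma~\ref{lem:boundBv}. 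Bounding $c_B \log c_B \leq (\tk/p)\, c_{\Delta_B^\ast}(\log(\tk/p) + \log c_{\Delta_B^\ast})$ and using that the $\Delta_B^\ast$ are distinct across blocks, the atypical sum is at most $(\tk/p)\sum_{\Delta : c_\Delta \geq \beta\tk/p^2} c_\Delta(\log c_\Delta + \log^2 \tk + \log(1/p))$, whose expectation is at most $(\tk/p)|L|e^{-\Theta(\beta)\tk/p} = O(|L|)$ for $\beta$ large enough. Finally, $\log^2 \tk \cdot \EX[\sum_B c_B] = O(|L|\log^2 k)$. Adding all parts gives the claimed $O(|L|(\log(1/p) + \log^2 k))$ bound.

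The main obstacle is linking heavy blocks to heavy cells. A naive Jensen-type bound fails because already $\EX[\sum_\Delta c_\Delta] = \Theta(|L|\tk)$ (there are $\Theta(p|L|)$ cells, each of expected conflict $\Theta(\tk/p)$), overshooting the target by a factor of $\tk$. The saving insight is that each line lies in only $O(t_\ell p/\tk + 1)$ blocks rather than $\Theta(t_\ell)$ cells, thanks to the concavity of $\UH(\lev_{\leq\tk}(S))$ and block aggregation, which keeps $\sum_B c_B$ linear in $|L|$. The threshold $T = \beta\tk^2/p^3$ is then tuned so that every atypical block contains a cell exceeding Lemma~\ref{lem:boundBv}'s threshold $\beta\tk/p^2$; the extra $\tk/p$ factor incurred when transferring the tail bound from cells to blocks is absorbed by the exponential decay in that lemma.
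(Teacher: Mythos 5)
Your proof is correct in substance, but it handles the ``typical'' part by a genuinely different route than the paper. For blocks with $c_B > \beta\tk^2/p^3$ you argue exactly as the paper does: the maximal cell $\Delta_B^*$ satisfies $c_{\Delta_B^*} \ge c_B p/\tk > \beta\tk/p^2$, the map $B \mapsto \Delta_B^*$ is injective, and the transfer factor $\tk/p$ is absorbed by the exponential decay in Lemma~\ref{lem:boundBv}. For the remaining blocks, however, the paper never proves your key estimate $\EX_{S,\tk}[\sum_B c_B] \ll |L|$ directly; instead it defines light blocks via a cap on the number of \emph{spanning} lines (at most $\beta\tk/p^2$) in addition to the cap on $c_B$, splits $c_B$ into spanning and non-spanning contributions, and bounds $\EX[|\mathcal{B}|\tk/p^2] \ll |L|$ via $|\mathcal{B}| \ll |\T_{\tk,S}|p/\tk$ and Claim~\ref{clm:lin_avg}. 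Your alternative — using convexity of the region below $\UH(\lev_{\le\tk}(S))$ to show each line meets a contiguous run of cells and hence only $O(t_\ell p/\tk + 1)$ blocks, then summing — yields the stronger and cleaner statement $\EX_{S,\tk}[\sum_B c_B] \ll |L|$, which makes the spanning/non-spanning bookkeeping unnecessary and also disposes of the $c_B\log^2\tk$ term in one stroke; the paper's version avoids having to control the first moment $\sum_{v} c_v$ over level/hull vertices. One step of yours deserves more care: the claim $\EX_{S,\tk}[\sum_\ell t_\ell] \ll |L|\tk$ does not follow from Claim~\ref{clm:lin_avg} alone, since $\sum_\ell t_\ell = \sum_v c_v$ summed over hull vertices is a conflict-weighted count. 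You should split that sum at $c_v \ge \beta\tk/p^2$: the bulk is at most $(\beta\tk/p^2)|V_{\tk}(S)|$, giving $O(|L|\tk/p)$ in expectation by Claim~\ref{clm:lin_avg} (which already suffices after multiplying by $p/\tk$), while the heavy-vertex tail is dominated by the cell sum in Lemma~\ref{lem:boundBv} because every line below a hull vertex conflicts with an adjacent cell. With that patch your argument is complete and matches the claimed bound.
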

\begin{proof}
  We say that a line $\ell \in C_B$ is \emph{spanning}
  for a block $B \in \mathcal{B}$ if $\ell$ intersects both the left and the right walls of $B$,
  otherwise it is non-spanning. Note that every line can be non-spanning
  for at most two blocks.  Let $\beta \geq 1$ be a sufficiently large constant.
  We say that a block $B \in \mathcal{B}$ is 
  \emph{light} if it has at most $\beta \tk/p^2$ spanning lines and if 
  $c_B \leq \beta \tk^2/p^3$. Otherwise, $B$ is called \emph{heavy}.

  We split the summation, as follows:
  \begin{equation}\label{equ:split}
    \sum_{B \in \mathcal{B}} c_B \left(\log{c_B} + \log^2\tk\right) = 
    \sum_{\substack{B \in \mathcal{B} \\ 
	B \text{ is light}}} c_B \left(\log{c_B} + \log^2 \tk\right)
    +
    \sum_{\substack{B \in \mathcal{B} \\ B \text{ is heavy }}}
    c_B \left(\log c_B + \log^2\tk\right).
  \end{equation}
  Let us first consider the sum over the light blocks. Let  $B$ be a light block. 
  By definition, we have $\log c_B \ll \log(1/p)  + \log k$. Furthermore, write
  $c_B = c_B^{s} + c_B^{n}$, where $c_B^{s}$ is the number of spanning
  lines in $C_B$, and $c_B^{n}$ is the  number of non-spanning 
  lines in $C_B$. Observe that
  \[
\sum_{\substack{B \in \mathcal{B} \\ B \text{ is light}}}
    c_B^{s} \ll |\mathcal{B}|(\tk/p^2),
  \]
 since each light block can have only $O(\tk/p^2)$ spanning lines,
 and that
  \[
\sum_{\substack{B \in \mathcal{B} \\ B \text{ is light}}}
    c_B^{n} \ll |L|,
  \]
  since the total number of non-spanning lines is at most $2|L|$, over 
  all blocks in $\mathcal{B}$.
  It follows that
  \begin{align*}
    \sum_{\substack{B \in \mathcal{B} \\ B \text{ is light}}} c_B 
    \left(\log{c_B} + \log^2 \tk\right)
    &\ll 
    \left(\sum_{\substack{B \in \mathcal{B} \\ B \text{ is light}}}
    c_B^{s} + c_B^{n}\right)
    \left(\log(1/p) + \log^{2} \tk\right)\\
    &\ll
    \left(|\mathcal{B}|(\tk/p^2) + |L|\right)
    \left(\log(1/p) + \log^2 \tk\right).
  \end{align*}
  Now we bound $|\mathcal{B}|(\tk/p^2)$.
  Since each block contains $p/\tk$ contiguous cells, 
  we have $|\mathcal{B}| \ll |\T_{\tk,S}|(p/\tk)$.
  Furthermore, because the number of cells is bounded by the number
  of vertices on the $\tk$-level, we get
  \[
    \EX_{S,\tk}[|\T_{\tk,S}|] \ll \EX_{S,\tk}[|V_{\tk}(S)|] \ll
      \EX_{S}[|S|] = |L|p,
  \]
  using Claim~\ref{clm:lin_avg} and the definition of $p$ as $|S|/|L|$.
  Thus,
  \[
    \EX_{S,\tk}[|\mathcal{B}|(\tk/p^2)] \ll \EX_{S,\tk}[|\T_{\tk,S}|/p] 
    \ll |L|.
  \]
  Therefore,
  \[
  \EX_{S,\tk}\left[\sum_{\substack{B \in \mathcal{B} \\ B \text{ is light}}} 
    c_B (\log{c_B} + \log^2 \tk)\right]\\
  \ll 
  |L|
  \left(\log(1/p) + \log^{2} k\right).
  \]

  To bound the sum over the heavy blocks in (\ref{equ:split}), 
  observe that by definition a heavy block $B$
  must contain a cell $\Delta$ with $c_\Delta > \beta \tk/p^2$: either
  there are more than $\beta \tk/p^2$ spanning lines, in which case all
  the cells in $B$ have this property, or $c_B > \beta \tk^2/p^3$, in which
  case the claim follows from the fact that $B$ contains only $\tk/p$ cells.
  Let $\Delta^* \in B$ be the cell that maximizes $c_\Delta$ for 
  $\Delta \in B$. Clearly, we have $c_{\Delta^*} > \beta \tk/p^2$ and
  $c_B \leq (\tk/p)c_{\Delta^*}$.
  Hence,
  \begin{align*}
    \sum_{\substack{B \in \mathcal{B} \\ B \text{ is heavy }}}
    c_B \left(\log c_B + \log^2\tk\right) 
    & \ll
     \sum_{\substack{B \in \mathcal{B} \\ 
	\Delta \in B, c_{\Delta} \ge \beta\tk/p^2 }} 
    (\tk/p) \cdot c_{\Delta} \left(\log {(c_{\Delta} \tk/p )} + \log^2 \tk\right) \\
    &\ll
    (\tk/p) \sum_{\substack{B \in \mathcal{B} \\ 
	\Delta \in B, c_{\Delta} \ge \beta\tk/p^2 }} 
    c_{\Delta} \left(\log c_{\Delta} + \log^2 \tk + \log(1/p)\right).
  \end{align*}

  By~(\ref{eq:heavy_blocks}),
  the expectation (over $S$) of the latter sum
  is at most
  \[
    (\tk/p)|L|e^{-\Theta(\beta)(\tk/p)} \ll |L|,
  \]
  for $\beta$ sufficiently large.
\end{proof}

\noindent\textbf{Remark.}
In the analysis of Lemma~\ref{clm:cRlogcR} concerning the bound for the heavy blocks, 
each such block $B$ may consist of both
heavy cells (that is, cells $\Delta$ with $c_{\Delta} \ge \beta\tk/p^2$) and light ones.
At first glance, one may suspect that the overall contribution of the light cells should 
have the bound $O(|L|(\log(1/p) + \log^{2} k))$, as obtained in the case for light blocks.
Nevertheless, since these cells belong to a heavy block, the actual bound is smaller,
and in fact follows from the property that the number of heavy blocks is eventually much
smaller than the number of light blocks (this property is an easy consequence 
of Lemma~\ref{lem:boundBv}).

\begin{claim}
  \label{clm:mR}
  We have:
  $\sum_{B \in \mathcal{B}} m_B \log^2 \tk \ll |L|k$.
\end{claim}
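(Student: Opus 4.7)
\textbf{Proof plan for Claim~\ref{clm:mR}.}
The plan is to show that every vertex counted by $m_B$ is actually a vertex of the single level $\lev_{\tk}(L)$, and then to invoke Claim~\ref{clm:lin_avg} to obtain a linear (in $|L|$) bound on the expected total before paying the $\log^2 \tk$ factor.

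First I would establish a geometric containment. Each block $B \in \mathcal{B}$ is a vertical strip made up of $\tk/p$ contiguous semi-unbounded (downward) trapezoidal cells of $\T_{\tk,S}$, so within $B$'s vertical strip $B$ occupies everything from the upper convex chain of $B$ down to $y=-\infty$. Consequently, any line $\ell \in L \setminus C_B$ (i.e., a line that does \emph{not} intersect $B$) must lie entirely above this upper chain inside the strip of $B$, since there is no room below $B$.

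Next I would translate this into a level statement. Fix any vertex $v \in V_{\tk}(C_B) \cap B$, and let $\ell_1,\ell_2 \in C_B$ be the two lines meeting at $v$. By definition, $v$ is a vertex of the $\tk$-level of $C_B$, so exactly $\tk-1$ lines of $C_B$ pass strictly below $v$. By the previous paragraph, every line of $L \setminus C_B$ passes strictly above $v$ (and above a small neighborhood of $v$, which is still contained in $B$ by general position). Hence the number of lines of $L$ strictly below $v$ is also $\tk-1$, and the same holds for points in a neighborhood of $v$, so $v$ is in fact a vertex of $\lev_{\tk}(L)$. Thus $V_{\tk}(C_B)\cap B \subseteq V_{\tk}(L)$. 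Since the blocks in $\mathcal{B}$ are pairwise disjoint vertical strips, each vertex of $V_{\tk}(L)$ is captured by at most one block, so
\[
\sum_{B \in \mathcal{B}} m_B \;=\; \sum_{B \in \mathcal{B}} |V_{\tk}(C_B)\cap B| \;\le\; |V_{\tk}(L)|.
\]

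Finally, I would take the expectation over the random choice of $\tk \in \{k,\ldots,2k\}$ and apply Claim~\ref{clm:lin_avg} with $S:=L$, yielding $\EX_{\tk}[|V_{\tk}(L)|] \ll |L|$. Since $\log^2 \tk \le \log^2(2k) \ll k$ (the latter being a trivial asymptotic inequality, with small cases absorbed into the constant), we conclude
\[
\EX_{\tk}\Bigl[\sum_{B \in \mathcal{B}} m_B \log^2 \tk\Bigr] \;\ll\; |L|\log^2 k \;\ll\; |L|k,
\]
as required. The only real subtlety is the first step: one must verify that the lines missing from $C_B$ sit strictly above $B$ (not below), which is where the downward-unbounded nature of the trapezoidal cells is essential; once this is noted, the remainder is bookkeeping.
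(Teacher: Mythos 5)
Your argument is correct, but its final step differs from the paper's. You both start from the same geometric fact: since $B$ is unbounded downward and every line of $L\setminus C_B$ misses $B$, the level of any point of $B$ with respect to $C_B$ equals its level with respect to $L$, so $V_{\tk}(C_B)\cap B \subseteq V_{\tk}(L)$ (the paper simply asserts the identity $m_B=|V_{\tk}(C_B)\cap B|=|V_{\tk}(L)\cap B|$, which you justify explicitly; your phrase ``exactly $\tk-1$ lines below $v$'' should really be ``$\tk-1$ or $\tk$'', but that does not affect the argument), and hence $\sum_B m_B \le |V_{\tk}(L)|$. From there the paper finishes deterministically: it invokes Dey's worst-case bound $|V_{\tk}(L)| \ll |L|\tk^{1/3} \ll |L|k^{1/3}$, so $\sum_B m_B\log^2\tk \ll |L|k^{1/3}\log^2 k \ll |L|k$ for \emph{every} value of $\tk\in\{k,\ldots,2k\}$, which is what the claim as stated (with no expectation sign) asserts. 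You instead exploit the randomness of $\tk$ and apply Claim~\ref{clm:lin_avg} with $S=L$, obtaining $\EX_{\tk}[|V_{\tk}(L)|]\ll|L|$ and thus a bound of $|L|\log^2 k\ll |L|k$ \emph{in expectation over $\tk$}. Your route is more elementary (it avoids Dey's $k$-level theorem altogether) and gives a sharper expected bound, and it suffices for Corollary~\ref{cor:per_iteration}, since the running time there is bounded in expectation over $S$ and $\tk$ anyway and $\tk$ is independent of the other random choices; the trade-off is that you prove a formally weaker statement than the claim as written, which holds pointwise in $\tk$. If you adopt your version, you should either restate the claim with $\EX_{\tk}$ or note explicitly that the expectation over $\tk$ is the quantity actually used downstream.
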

\begin{proof}
  Recall that $m_B = |V_{\tk}(C_B) \cap B| = |V_{\tk}(L) \cap B|$. 
  Using Dey's bound on the size of the $\tk$-level~\cite{Dey98},
  it follows that
  $\sum_{B \in \mathcal{B}} m_B = |V_{\tk}(L)| \ll |L|k^{1/3}$.
  The claim is now immediate.
\end{proof}

\begin{claim}
  \label{clm:aR}
  We have:
  $\sum_{B \in \mathcal{B}} a_B \ll |L|k$.
\end{claim}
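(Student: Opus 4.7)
My plan is to reduce the claim to the standard Clarkson--Shor bound $|V_{\le j}(L)| \ll |L|\, j$ on the combinatorial complexity of the $(\le j)$-level in an arrangement of $n$ lines, applied with $j = 2\tk - 1 = O(k)$. Concretely, I will show that $a_B$ counts only vertices of $\A(L)$ that lie inside the block $B$ and whose level with respect to $\A(L)$ is at most $2\tk - 1$; since the blocks in $\mathcal{B}$ are pairwise disjoint in their $x$-ranges, summing these local bounds gives exactly $|V_{\le (2\tk - 1)}(L)|$.

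The first step is to invoke the observation already recorded in the proof of Lemma~\ref{lem:per_region}: every vertex counted by $a_B$ lies below $\UH(\lev_{\tk}(C_B) \cap B)$ and is therefore contained in $\lev_{\le (2\tk - 1)}(C_B) \cap B$. In other words, the level of each such vertex with respect to the \emph{local} arrangement $\A(C_B)$ is at most $2\tk - 1$.

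The second step, which is really the only non-routine point, is to argue that for any vertex $v$ lying inside $B$, its level with respect to $\A(L)$ coincides with its level with respect to $\A(C_B)$. The reason is geometric: $B$ is a semi-unbounded vertical strip extending to $y = -\infty$, capped above by a convex sub-arc of $\UH(\lev_{\leq \tk}(S))$. Any line $\ell \in L \setminus C_B$ is, by definition, disjoint from $B$, so over the $x$-range of $B$ it must lie entirely above the upper boundary of $B$, and hence above $v$. Therefore $\ell$ is not crossed by the open downward vertical ray from $v$, and contributes nothing to $v$'s level in either arrangement. Combined with step one, this yields $a_B \le |V_{\le (2\tk - 1)}(L) \cap B|$.

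For the last step, the $x$-projections of the blocks in $\mathcal{B}$ are pairwise disjoint (up to their vertical walls), so each vertex of $\A(L)$ is charged to at most one block. Summing gives
\[
\sum_{B \in \mathcal{B}} a_B \;\le\; |V_{\le (2\tk - 1)}(L)| \;\ll\; |L|\,\tk \;\ll\; |L|\,k,
\]
by Clarkson--Shor. I expect the only delicate point to be the level-invariance asserted in step two: it depends crucially on $B$ being downward-unbounded, which forces every line outside $C_B$ to pass \emph{above} $B$ (and hence above $v$) in the $x$-range of $B$. Without this, we would have to apply Clarkson--Shor separately inside each $\A(C_B)$, and the resulting bound $\sum_B c_B \tk$ would be far too weak once summed.
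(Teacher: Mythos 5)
Your proof is correct and follows essentially the same route as the paper: both rest on the Everett~\etal\ observation that everything constructed inside a block $B$ lies in $\lev_{\leq 2\tk-1}(C_B)\cap B$, transfer this to $\lev_{\leq 2\tk-1}(L)$, and then invoke the $O(|L|\tk)$ bound on the $(\leq 2\tk-1)$-level. The only difference is that you spell out the level-transfer step (lines outside $C_B$ miss the downward-unbounded block, so local and global levels agree inside $B$) and the disjointness of the blocks' $x$-ranges, which the paper leaves implicit.
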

\begin{proof}
  Everett~\etal~\cite{EverettRovKr96} have shown that no element
  in $C_B'$ contains a point which lies above 
  $\lev_{2\tk-1}(C_B) \cap B$. Since all sets $C_B'$ are clipped to $B$, 
  for each $B \in \mathcal{B}$, it follows that all portions of the various 
  arrangements that we construct, over all $B \in \mathcal{B}$, lie within 
  $\lev_{\leq 2\tk-1}(L)$. Hence,
  \[
  \sum_{B \in \mathcal{B}} a_B \ll |L|\tk \ll |L|k.
  \] 
\end{proof}

We thus conclude:

\begin{corol}
  \label{cor:per_iteration}
  The total expected running time for the $i$th iteration is
  \[ 
  O\left(n\alpha(n) + |L_i|\left(k + \log{\left(\frac{|L_{i}|}{|L_{i-1}|}\right)}\right)\right) .
 \] 
\end{corol}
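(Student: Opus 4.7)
The plan is simply to assemble the estimates established in the four preceding claims. Set $S := L_{i-1}$ and $L := L_i$, so that by the definition of the gradation we have $p = |S|/|L| = |L_{i-1}|/|L_i|$, and the expected runtime at iteration $i$ is, by (\ref{equ:runtime_bound}), asymptotically at most
\[
n\alpha(n) + \EX_{\tk}[|V_{\tk}(S)|] + \EX_{S,\tk}\!\Bigl[\sum_{B\in\mathcal{B}} c_B(\log c_B + \log^2 k)\Bigr] + \sum_{B\in\mathcal{B}} m_B \log^2 k + \sum_{B\in\mathcal{B}} a_B.
\]

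I would then bound the four nontrivial summands in turn. Claim~\ref{clm:lin_avg} yields $\EX_{\tk}[|V_{\tk}(S)|] \ll |L_{i-1}| \leq |L_i|$. Claim~\ref{clm:cRlogcR} contributes $|L_i|(\log(1/p) + \log^2 k) = |L_i|(\log(|L_i|/|L_{i-1}|) + \log^2 k)$, after rewriting $\log(1/p)$ using the gradation identity above. Claims~\ref{clm:mR} and~\ref{clm:aR} each contribute $O(|L_i|k)$, directly as stated.

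The final step is to observe that $\log^2 k = O(k)$ for all integers $k \geq 1$, so the $\log^2 k$ piece arising from Claim~\ref{clm:cRlogcR} is absorbed into the $|L_i|k$ contribution coming from Claims~\ref{clm:mR} and~\ref{clm:aR}. Summing the four estimates together with the standalone $n\alpha(n)$ term yields exactly the stated bound $O(n\alpha(n) + |L_i|(k + \log(|L_i|/|L_{i-1}|)))$. I do not expect any real obstacle here: all of the analytical difficulty---in particular the tail bound for heavy blocks in Lemma~\ref{lem:boundBv} and the use of Dey's $O(nk^{1/3})$ bound for the $\tk$-level---has already been carried out inside the referenced claims, so this corollary is a purely arithmetic aggregation step, whose only point requiring care is the correct identification of $p$ with $|L_{i-1}|/|L_i|$ and the verification that $\log^2 k$ is dominated by $k$.
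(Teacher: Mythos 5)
Your proposal is correct and matches the paper's own proof: both simply substitute the bounds from Claims~\ref{clm:lin_avg}--\ref{clm:aR} into (\ref{equ:runtime_bound}), use $\log^2 k = O(k)$, and translate $p = |S|/|L|$ back into $|L_{i-1}|/|L_i|$. There is nothing missing; this is indeed a purely arithmetic aggregation step.
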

\begin{proof}
This follows by substituting the bounds from
Claims~\ref{clm:cRlogcR}--\ref{clm:aR} into
(\ref{equ:runtime_bound}), by using that $\log^2 k = O(k)$, and by remembering that
we set $L = L_i$, $S = L_{i-1}$ and $p = |S|/|L|$.
\end{proof}

Note that 
\[
\sum_{i=2}^{\log^* n - \log^* k + 1} |L_i|k =
\sum_{i=2}^{\log^* n - \log^* k + 1} nk/\log^{(i)} n
\ll nk,
\]
since the sequence $\bigl\{1/\log^{(i)} n\bigr\}_{i=\log^* n - \log^* k}^{2}$ 
decreases faster than any geometric sequence. 
Moreover, for all but the last iteration, we have
$|L_{i}|\log(|L_{i}|/|L_{i-1}|) \ll |L_{i}|\log^{(i)} n \ll n$. 
At the last iteration, we have $|S| \geq n/k$, so
$\log (|L|/|S|) \leq \log k$, and thus
\[
\sum_{i=2}^{\log^* n - \log^* k + 1} |L_i| \log^{(i)} n  + |L|\log{k} \\
\ll 
n(\log^* n - \log^* k + \log{k}) .
\]
It thus follows that the overall expected running time is
$O(n \alpha(n)(\log^* n - \log^*k) + nk)$.
It is easy to verify that when $k=0$ we obtain the same asymptotic time bound
as in Theorem~\ref{thm:CH_main}.

\noindent\textbf{A faster algorithm under the obliviousness assumption.}
Similar to Section~\ref{sec:oblivious},
the expected running time can be improved to
$O(n (\alpha(n) + k))$ assuming obliviousness.
As before, we now compute a gradation during the preprocessing
phase:
$P_1 \subseteq P_2 \subseteq \cdots \subseteq P_{1+ \log\log n} = P$
with $|P_1| \ll n/\log n$ and $|P_{i}| = 2|P_{i-1}|$, and we compute
each of the arrangements $\mathcal{A}(P_i^{*})$ in the dual plane. 

The algorithm for processing a set of lines $L$, with each line containing
exactly one point, is just as above, with two major differences:
first, we compute the gradation for $L$ by using the precomputed
gradation for $P$. Second, during the $i$th iteration
we use $A(P_i^{*})$ instead of $A(P^{*})$ to determine the zone of $\gamma$.
Using similar considerations as in Section~\ref{sec:oblivious}, 
the bound in Corollary~\ref{cor:per_iteration} now becomes 
$O(|L_i|(\alpha(n) + k))$, because $\log(|L_i|/|L_{i-1}|) = 1$. Summing 
over the various iterations $i$ and the final step yields the bound
$O(n(\alpha(n) + k))$, as asserted.
The total storage requirement remains $O(n^2)$. 

\smallskip\noindent
This at last concludes the proof of Theorem~\ref{thm:level_main}.

\section{Lower Bounds}
\label{sec:lowerbounds}

In this section we study problems where
preprocessing $\A(L)$ is unlikely to decrease the query time to $o(n\log{n})$ 
(at least under some computational models).

\noindent\textbf{Delaunay triangulations.}
It has already been observed in~\cite{BuchinLoMoMu11,vKreveldLoMi10}
that for some sets $L$, even when we have $\A(L)$ precomputed, there are point 
sets,  with each point lying on a distinct line, such that their Delaunay 
triangulation 
cannot be constructed in $o(n \log n)$ time
(albeit sometimes one can obtain better bounds if each point lies on
a fat region given in advance~\cite{BuchinLoMoMu11,LoefflerSn10}). 
This lower bound holds in the classic algebraic computation tree
model~\cite[Chapter~16]{AroraBa09}, and it 
essentially comes from a construction due to Djidjev and 
Lingas~\cite{DjidjevLi95}.
Specifically, 
they showed that when the points are sorted in just a single direction, 
one cannot compute their Delaunay triangulation in less than $\Omega(n \log n)$ time. 
Thus, if $L$ is a set of vertical lines, we can only anticipate the $x$-order 
of the points (received later), from which the lower bound follows.
Note that this lower bound also implies that no speedup is possible for
computing the Euclidean minimum spanning tree (EMST), since the
Delaunay triangulation can be constructed in linear time once
the EMST is known~\cite{ChinWa99,KleinLi96}.

\noindent\textbf{Closest Pairs.}
Finding the closest pair in a point set
is somewhat easier than the Delaunay triangulation problem (since the
latter has an edge between the closest pair~\cite{deberg2008cga}),
but is often harder than computing convex hulls
(except perhaps when the model of computation provides the floor
function as well as a source of randomness, see, e.g.,~\cite{KM-95}).
Formally, the problem is defined as follows: 
given a set $L = \{\ell_1, \ldots, \ell_n\}$ of lines in
the plane, compute a data structure such that given any 
point set $P = \{p_1, \ldots, p_n\}$ with $p_i \in \ell_i$
for $i = 1, \ldots, n$, we can quickly find a pair 
$(p_i,p_j) \in P \times P$ of distinct points that 
minimizes $\| p_i - p_j \|$. 
Incorporating the lower bound by Djidjev and Lingas~\cite{DjidjevLi95},
we show the following:
\begin{prop}
  There exists a set $L = \{\ell_1, \ldots, \ell_n\}$ of lines in the plane,
  such that for any point set $P$ with each point lying on a distinct line 
  of $L$, finding the closest pair in $P$ (after preprocessing $L$) 
  requires $\Omega(n \log n)$ operations under the algebraic computation 
  tree model.
\end{prop}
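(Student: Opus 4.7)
The plan is a reduction from the $\varepsilon$-closeness decision problem for $n$ reals in $[0,1]$ (``do some two of $a_1,\ldots,a_n$ differ by at most $\varepsilon$?''), which is known to require $\Omega(n\log n)$ operations in the algebraic computation tree model by a standard Ben-Or argument~\cite{BenOr83}.

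First I would specify $L$: let $0<m_1<m_2<\cdots<m_n$ be fixed positive constants with $m_n$ much smaller than $\varepsilon$ (say $m_n \le \varepsilon/n^{10}$), and take $\ell_i$ to be the line $y=m_i x$, perturbed infinitesimally so that $L$ is in general position. This $L$ depends only on $n$ and may be preprocessed in any manner.

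Given a closeness instance $(a_1,\ldots,a_n)$, I would encode it as $p_i\eqdef(a_i,m_i a_i)\in\ell_i$ in $O(n)$ algebraic operations, and invoke the hypothetical closest-pair query on $P$. The key inequality is
\[
  |a_i-a_j|\;\le\;d(p_i,p_j)\;=\;\sqrt{(a_i-a_j)^{2}+(m_i a_i-m_j a_j)^{2}}\;\le\;|a_i-a_j|+m_n,
\]
so the closest-pair distance of $P$ agrees with $\min_{i\ne j}|a_i-a_j|$ up to an additive error $m_n\ll\varepsilon$, and a single comparison against a suitable threshold near $\varepsilon$ recovers the answer to the closeness question. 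Since the reduction is an algebraic computation tree of size $O(n)$ and constant depth independent of $L$, a $o(n\log n)$ closest-pair algorithm would yield one for $\varepsilon$-closeness, a contradiction.

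The main obstacle is handling inputs with $\min_{i\ne j}|a_i-a_j|\approx\varepsilon$, where the $m_n$-slack of the reduction is not cleanly on one side of the threshold. I would sidestep this by using instead the standard \emph{promise} version of $\varepsilon$-closeness (either $\min_{i\ne j}|a_i-a_j|<\varepsilon/2$ or $\min_{i\ne j}|a_i-a_j|>2\varepsilon$); a Ben-Or-style count of the $\Omega(n!/c^{n})$ connected components of the ``far apart'' region yields the same $\Omega(n\log n)$ bound for the promise problem, and this version is comfortably distinguished by the reduction with room to spare.
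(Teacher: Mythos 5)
Your proposal is correct and follows essentially the same route as the paper: the paper reduces from a promise problem it calls \textsc{Fuzzy-2-Separation} (pairs at distance $\leq 1$ vs.\ $\geq 2$), proves its $\Omega(n\log n)$ hardness by the Ben-Or connected-component count over the $n!$ permuted well-separated inputs, and embeds the reals on the horizontal lines $y=i/n$ so that a single threshold comparison on the closest-pair distance decides the promise problem. Your only deviations are cosmetic --- tiny-slope lines with an $\varepsilon/2$ vs.\ $2\varepsilon$ gap instead of parallel lines at heights $i/n$ with a $1$ vs.\ $2$ gap --- and you correctly identify and resolve the same fuzziness issue the paper flags as the one non-standard step.
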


\begin{proof}
Consider the problem \textsc{Fuzzy-2-Separation}:
for a sequence $x_1, \ldots, x_n$ in $\R$, 
output \textsc{No}, if there exists a pair $1 \leq i < j \leq n$ with 
$|x_i - x_j| \leq 1$, and \textsc{Yes}, if for each pair
$1 \leq i < j \leq  n$ we have $|x_i - x_j| \geq 2$.
In all other cases the answer is arbitrary.

\begin{claim}
  \label{clm:fuzzy}
  Any algebraic decision tree for the problem \textsc{Fuzzy-2-Separation}
  has depth $\Omega(n \log n)$.
\end{claim}

\begin{proof}
This follows from a straightforward application of the technique of 
  Ben-Or~\cite{BenOr83}. The only somewhat non-standard feature is the
  need to deal with fuzziness. Let 
  \[
  W_1 = \{(x_1, \ldots, x_n) \in \R^n \mid \forall\, 1 \leq i < j \leq n:
  |x_i - x_j | > 1\},
  \]
  and let 
  \[
  W_2 = \{(x_1, \ldots, x_n) \in \R^n \mid \forall\, 1 \leq i < j \leq n:
  |x_i - x_j | \geq 2\}.
  \]
  
  Let $T$ be a decision tree for \textsc{Fuzzy-2-Separation},
  and let $W = T^{-1}(\textsc{Yes}) \subseteq \R^n$ be the set of
  inputs that lead to a leaf in $T$ labeled \textsc{Yes}. By definition,
  we have $W_1 \supseteq W \supseteq W_2$. It now follows that
  $W$ has at least $n!$ different connected components, since
  the $n!$ inputs $x_\pi = (2\pi(1), 2\pi(2), \ldots, 2\pi(n))$ for
  any permutation $\pi$ of $\{1, \ldots, n\}$ are all contained
  in $W_2$ and reside in different connected components of $W_1$
  (see~\cite[Theorem~16.20]{AroraBa09} for this standard technique).
  Hence, Ben-Or's result~\cite{BenOr83} implies that $T$ has
  depth $\Omega(\log n!) = \Omega(n \log n)$.
\end{proof}
The reduction from \textsc{Fuzzy-2-Separation} to
closest pair queries is almost straightforward.
For $i = 1, \ldots, n$, let $\ell_i$ be the horizontal line 
$\ell_i : y = i/n$, and let $L = \{\ell_1, \ldots, \ell_n\}$.
Thus, the only information we can precompute from $L$ is exactly this order.
Given an instance $(x_1, \ldots, x_n)$ of \textsc{Fuzzy-2-Separation}, 
we map each 
$x_i$ to a point $p_i = (x_i, i/n) \in \ell_i$,
and then find the closest pair in the resulting point set. 
If the distance of the closest pair is greater than $2$, 
our algorithm outputs \textsc{Yes}, otherwise it outputs \textsc{No}. 
Clearly, the overhead for this reduction is linear.
We are now left to show the correctness of the reduction.
Indeed, if $|x_i - x_j| \geq 2$, for every pair of indices 
$1 \leq i < j \leq n$, then clearly 
$\| p_i - p_j \| \geq \sqrt{4 + 1/n^2} > 2$, and this
in particular applies for the closest pair of points.
Otherwise, if there exists a pair $1 \leq i < j \leq n$ with $|x_i - x_j| < 1$,
then $\|p_i - p_j \| \leq \sqrt{1 + 1} < 2$ (and this also upper bounds 
the distance between the closest pair), so the reduction reports the 
correct answer on all mandatory 
\textsc{Yes} and \textsc{No} instances, as asserted.
\end{proof}

\noindent\textbf{Convex hull in three dimensions.} 
Returning to the convex hull problem, we next study its extension 
to three dimensions.
That is, given a set $H = \{h_1, h_2, \ldots, h_n\}$ of $n$ planes in $\R^3$,
we would like to compute a data structure, so that for any point set 
$P = \{p_1, \ldots, p_n\}$ with $p_i \in h_i$,
$i = 1, \ldots, n$, we can construct $\conv(P)$ quickly. 
Since the complexity of the convex hull in both 
$\R^2$ and $\R^3$ is only linear, and since there are several algorithms that
construct the convex hull (in both cases) in the same asymptotic running time 
(see, e.g.,~\cite{deberg2008cga, CS-89}),
one may ask if a three-dimensional convex hull query can be answered
in $o(n \log{n})$ time as well.
Using the well-known lifting transformation~\cite{SA-95}, one can quickly derive
a lower bound from the result about Delaunay triangulations
mentioned above, but below we also give simple direct reduction
(which follows immediately from a result of Seidel~\cite{Seidel84}).

\begin{prop}
  \label{prop:planes3D}
  There is a set $H = \{h_1, \ldots, h_n\}$ of planes in $\R^3$,
  such that for any point set $P$ with each point lying on a distinct plane of $H$,
  constructing $\conv(P)$ (after preprocessing $H$) 
  requires $\Omega(n \log n)$ operations under the algebraic computation 
  tree model.
\end{prop}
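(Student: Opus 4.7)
The plan is to let $H$ consist of $n$ parallel horizontal planes and reduce sorting to the 3D convex-hull query; this is essentially the reduction exploited by Seidel~\cite{Seidel84}. Concretely, I would take $h_i:\; z = i/n$ for $i=1,\ldots,n$, so that preprocessing $H$ can encode nothing beyond the $z$-order of a future query point set. Given a sorting instance $t_1,\ldots,t_n \in \R$, place $p_i \eqdef (t_i,\, t_i^2,\, i/n) \in h_i$ and feed $P = \{p_1,\ldots,p_n\}$ to the hypothetical $o(n\log n)$-time convex-hull query algorithm.

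The key idea is the parabolic embedding: let $\pi:\R^3\to\R^2$ be the projection that drops the $z$-coordinate. Then $\pi(p_i) = (t_i, t_i^2)$ lies on the strictly convex curve $y = x^2$, so every $\pi(p_i)$ is a vertex of the planar convex hull of $\pi(P)$, and walking around its boundary lists the points in $t_i$-sorted order. Because projection commutes with convex hulls, i.e.\ $\pi(\conv(P)) = \conv(\pi(P))$, recovering this sorted order from $\conv(P)$ amounts to extracting the silhouette of $\conv(P)$ as seen from $z = +\infty$, i.e.\ the edges that separate an upward-facing from a downward-facing facet. On a standard DCEL representation of $\conv(P)$, this silhouette extraction is a single linear-time traversal.

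Composing the two steps yields an algebraic computation tree of depth $o(n\log n)$ for sorting $n$ reals, contradicting the $\Omega(n\log n)$ lower bound of Ben-Or~\cite{BenOr83}. The only thing to check is that preprocessing $H$ cannot short-circuit the reduction, but this is immediate: $H$ is chosen once and for all in advance, and the adversary retains full freedom to set $(x_i,y_i)$ on each $h_i$, so the preprocessing output is a function of $H$ only and contributes a constant additive term that is independent of $t_1,\ldots,t_n$. I expect the only mildly delicate step to be the silhouette extraction, but it is entirely routine given the DCEL; everything else is a direct transcription of the standard sorting-from-convex-hull argument into our preprocessed setting, which is why the paper attributes it to Seidel.
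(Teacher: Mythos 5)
Your proposal is correct and follows essentially the same route as the paper: fix $n$ parallel horizontal planes (so preprocessing can reveal nothing about the future points beyond their $z$-order), lift a hard planar instance onto them, and recover the planar answer from the 3D hull in linear time by projection. The only cosmetic difference is that you reduce directly from sorting via the parabolic lifting $(t_i,t_i^2)$ and argue the silhouette extraction explicitly, whereas the paper reduces from the planar convex hull problem for an arbitrary planar point set and cites Seidel for the linear-time projection step; both yield the same $\Omega(n\log n)$ bound.
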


\begin{proof}
Let $h_i$ be the plane defined by the equation $z = i$, 
for $i = 1, \ldots, n$, and let $H = \{h_1, \ldots, h_n\}$.
We give a reduction from planar convex hulls to computing three-dimensional
convex hulls of point sets, where each plane in $H$ contains precisely one such point. 
Let $P = \{p_1, \ldots, p_n\}$ be a set of points in the plane,
and, for $i = 1, \ldots n$, let $\widehat p_i := (p_{ix}, p_{iy}, i)$, that is, 
the point obtained by lifting $p_i$ to $h_i$. 
As observed by Seidel~\cite[Section~IV]{Seidel84}, 
to compute the planar convex hull $\conv(P)$, 
it suffices to perform a convex hull query for 
$\widehat P = \{\widehat p_1, \widehat p_2, \ldots, \widehat p_n\}$ and
then project the result onto the $xy$-plane.
It is shown in~\cite{Seidel84} that once we have $\conv(\widehat P)$ at hand,
the time to project it onto the $xy$-plane (and then extract the actual 
planar convex hull) is only $O(n)$.
Thus the overhead of the reduction is linear, as is easily verified.  
The result now follows from the standard $\Omega(n \log n)$
lower bound for planar convex hulls in the algebraic computation tree model
(see, e.g.,~\cite{BenOr83}).
\end{proof}

\noindent\textbf{Sorting.}
Interestingly, a similar approach also shows that
\emph{sorting} requires $\Omega(n \log n)$ operations 
under the algebraic computation tree model. 
We have a set $L = \{\ell_1, \ldots, \ell_n\}$ 
of $n$ lines in the plane, and we wish to compute a data structure 
such that for any
set $P = \{p_1, \ldots, p_n\}$ of points with 
$p_i \in \ell_i$, $i=1, \ldots, n$, we can quickly
sort these points according to their $x$-order.

\begin{prop}
  \label{prop:sorting}
  There exists a set $L = \{\ell_1, \ldots, \ell_n\}$ of lines in the plane,
  such that for any point set $P$ with each point lying on a distinct line of $L$,
  sorting $P$ according to its $x$-order (after preprocessing $L$) 
  requires $\Omega(n \log n)$ operations under the algebraic computation 
  tree model.
\end{prop}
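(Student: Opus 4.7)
The plan is to follow essentially the template of Proposition~\ref{prop:planes3D}, giving a linear-time reduction from the classical comparison-based sorting problem over $\R$. I would take $L$ to consist of $n$ horizontal lines, say $\ell_i : y = i/n$ for $i = 1, \ldots, n$. With this choice, any data structure built during the preprocessing phase is a function of $L$ alone, and the only meaningful information $L$ carries is the (fixed) $y$-order of the lines, which reveals nothing about the $x$-order of the points received later.

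Given an arbitrary instance $x_1, \ldots, x_n \in \R$ of sorting, I would form $P = \{p_1, \ldots, p_n\}$ by setting $p_i := (x_i, i/n) \in \ell_i$ in linear time, and then invoke the hypothetical query algorithm to sort $P$ by $x$-coordinate. The returned permutation is precisely the sorted order of the $x_i$'s, so an $o(n \log n)$-time query algorithm would yield an $o(n \log n)$-time comparison-based sorter, contradicting the standard $\Omega(n \log n)$ lower bound.

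To make the lower bound rigorous in the algebraic computation tree model, I would treat the preprocessed structure as hard-coded (since it depends only on the fixed $L$, not on $P$), so that the query algorithm becomes an algebraic computation tree $T$ whose input is the vector $(x_1, \ldots, x_n) \in \R^n$ encoding the $x$-coordinates, and whose output is one of $n!$ permutations. For each permutation $\pi$, the preimage $T^{-1}(\pi)$ contains the open cell $\{x_{\pi(1)} < x_{\pi(2)} < \cdots < x_{\pi(n)}\}$, so the accepting regions have at least $n!$ distinct connected components. Applying Ben-Or's theorem~\cite{BenOr83} exactly as in Claim~\ref{clm:fuzzy} then yields depth $\Omega(\log n!) = \Omega(n \log n)$. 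The main (and only mild) obstacle is articulating cleanly that the preprocessing genuinely acts as a constant from the query's perspective, so that it does not contribute to the depth of the decision tree being analyzed; this is handled by the same observation already used in Proposition~\ref{prop:planes3D}.
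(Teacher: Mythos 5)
Your proof is correct and is essentially the paper's own argument: the same reduction from sorting reals, placing horizontal lines known in advance and lifting $x_i$ onto the $i$th line so that the $x$-order of the query points is exactly the sorted order, then invoking the $\Omega(n \log n)$ algebraic computation tree lower bound for sorting. The extra detail you give (hard-coding the preprocessed structure and counting the $n!$ cells via Ben-Or) is just a more explicit rendering of the standard lower bound the paper cites.
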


\begin{proof}
Let $X = \{x_1, \ldots, x_n\} \subseteq \R$. 
For $i = 1, \ldots, n$, let $\ell_i$ be the line 
$\ell_i : y = i$, and let $L = \{\ell_1, \ldots, \ell_n\}$.
We now lift each $x_i$ on $\ell_i$, and obtain the point $p_i := (x_i,i)$,
$i=1,\ldots, n$; let $P$ denote this set of points.
It is now easy to see that the $x$-order of $P$ yields the sorted order
for the numbers in $X$, and that this reduction has a linear running time.
\end{proof}

\section{Concluding remarks}
Note that Proposition~\ref{prop:sorting}, which has a 
straightforward proof, has an intriguing implication emphasizing a 
main contribution of this paper: while the ``standard'' planar convex hull 
and sorting problems are basically equivalent in terms of hardness 
(e.g.,~\cite{deberg2008cga}),  in our 
setting convex hull queries are in fact \emph{easier}. 
This improvement stems from the ``output-sensitive nature'' of convex hulls: 
points inside the hull are irrelevant to the computation, and the
information provided by $L$, combined with our update technique, 
allows us to quickly discard those non-extremal points, and not further 
process them in following iterations.
In our setting the two problems become equivalent if the input points are
in convex position.
Then, Proposition~\ref{prop:sorting} does 
not apply, since the points are sorted along two directions, 
and having the order according to one of them immediately implies 
the order according to the other. 

Our study raises several open problems.
The first one is whether the $\log^*{n}$ 
factor in the query time bound 
is indeed necessary for both convex hull and $(\le k)$-level queries. 
We conjecture it to be an artifact of the technique 
and that the
actual running times are $O(n \alpha(n))$ and $O(n(\alpha(n) + k))$ for the two
respective problems (as in the obliviousness model).
Another problem concerns the
case of convex hulls for points restricted  to 
three-dimensional \emph{lines}.
In this case, the lower bound in Section~\ref{sec:lowerbounds}
does not apply. 
Moreover, if the lines are \emph{parallel}, a simple variant of
our approach yields expected query time $O(n \log\log n)$ with polynomial
preprocessing and storage. 
Is there a better bound? What happens in the general case?

\paragraph*{Acknowledgments}
The authors wish to thank Maarten L\"offler for suggesting the problem and
for interesting discussions,
and Boris Aronov and Timothy Chan for helpful discussions.

We would like to thank the anonymous referees for their careful reading
of the paper and for numerous insightful comments that improved the
quality of the paper.

\bibliographystyle{abbrv}
\bibliography{lines}

\appendix

\section{Levels in Arrangements}
\label{app:levels}

\begin{proof}[Proof of Lemma~\ref{lem:boundBv}:]
We actually consider the sum over all $\Delta \in \T_{\tk,S}$ 
with $c_{\Delta} \ge 2\beta'\tk/p$,  where $\beta' > 1$ is a constant
to be fixed shortly. Then the lemma follows by 
choosing $\beta \ge 2\beta'$. 
In what follows, with a slight abuse of notation, we denote $\beta'$ by $\beta$.
For every vertex $v$ of $\A(L)$, let $C_v$ be the set of lines intersecting
the (open) downward vertical ray emanating from $v$, and put $c_v \eqdef |C_v|$.
Every vertex of $\lev_{\tk}(S)$ bounds at most two cells in 
$\T_{\tk,S}$, and for every $\Delta \in T_{\tk,S}$
any line in $C_\Delta$ passes under at least one vertex of $\Delta$.
Thus, we have $c_\Delta \leq 2\max\{c_{v_1},c_{v_2}\}$, where 
$v_1$, $v_2$ are the two vertices of $\Delta$.
We thus have:
\[
\sum_{\substack{\Delta \in \T_{\tk,S} \\ c_{\Delta} \ge 2\beta\tk/p^2 }} 
  c_{\Delta} (\log c_{\Delta} + \log^2 \tk + \log{(1/p)} )
\ll 
\sum_{\substack{v \in  V_{\tk}(S) \\ c_v \ge \beta\tk/p^2 }} 
  c_v (\log c_v + \log^2 \tk + \log{(1/p)}).
\]
Now, let $v$ be a vertex of $\A(L)$ at
level $c_v \geq (\tk/p^2)-1$, and let $\ell_1, \ell_2$ be the two lines
defining $v$. The vertex $v$ appears in $V_{\tk}(S)$ precisely if
(i) $\ell_1$ and $\ell_2$ are in $S$; and (ii) $S$ contains $\tk-1$ or $\tk$ lines
below $v$. Thus,
\begin{align*}
 \Pr[v \in V_{\tk}(S)] 
&= \Pr[\{\ell_1, \ell_2\} \subseteq S \wedge
  |S \cap C_v| \in \{\tk-1, \tk\}]  \label{equ:}\\
 &=\left(\binom{|L|-2}{p|L|-2}/\binom{|L|}{p|L|}\right)\cdot
 \,\Pr\left[|S \cap C_v| \in \{\tk-1, \tk\} 
 \mid \{\ell_1, \ell_2\} \subseteq S \right] \notag.
\end{align*}
Conditioned on $S$ containing $\{\ell_1, \ell_2\}$, the
sample $S' \eqdef S \setminus \{\ell_1, \ell_2\}$ is 
a random $(p|L|-2)$-sample from the set 
$L' \eqdef L \setminus \{\ell_1, \ell_2\}$.
Hence, $|S' \cap C_v|$ follows a 
hypergeometric distribution, so
Hoeffding's bound~\cite{Chvatal79,Hoeffding63} implies that
\begin{multline*}
  \Pr[|S \cap C_v| \in \{\tk-1, \tk\}\mid \{\ell_1,\ell_2\} \subseteq S]
  \leq \Pr[|S' \cap C_v|/|S'| \leq \tk/|S'|]\\
  = \Pr[|S' \cap \overline{C_v}|/|S'| \geq 1- \tk/|S'|]
  \leq
  \left(\left(\frac{1-c_v/|L'|}{1-\tk/|S'|}\right)^{1-\tk/|S'|}
  \left(\frac{c_v/|L'|}{\tk/|S'|}\right)^{\tk/|S'|}\right)^{|S'|},
\end{multline*}
recalling that $c_v = |C_v|$ denotes the number of lines below $v$.
Now note that 
\[
p|L'| = p(|L|-2) = |S| - 2p = |S'| + 2 -2p.
\]
Thus, writing $c_v = t \cdot \tk/p$, for some appropriate $t \geq \beta/p$, we get
\begin{multline*}
  \Pr[|S \cap C_v| \in \{\tk-1, \tk\} \mid \{\ell_1, \ell_2\} \subseteq S]\\
  \leq
  \Biggl(\Bigl(\frac{1-t\tk/(|S'|+2(1-p))}{1-\tk/|S'|}\Bigr)^{1-\tk/|S'|}\cdot 
  \Bigl(\frac{t\tk/(|S'|+2(1-p))}{\tk/|S'|}\Bigr)^{\tk/|S'|}\Biggr)^{|S'|}.
\end{multline*}
To simplify this, we first observe that
\[
1-\frac{t\tk}{|S'|+2(1-p)} \leq 1 - \frac{t\tk}{|S'| + 2}  \leq
1 - \frac{t\tk}{|S'| + |S'|/3} =
1 - \frac{3t\tk}{4|S'|},
\]
since we can assume $S'$ is large enough so that $2 \leq |S'|/3$. Therefore,
\[
\frac{1-t\tk/(|S'|+2(1-p))}{1-\tk/|S'|} \leq
\frac{1-(3t/4)\tk/|S'|}{1-\tk/|S'|} = 
1 - \frac{(3t/4 -1)\tk/|S'|}{1-\tk/|S'|}.
\]
For the other term, we calculate
\[
\frac{t\tk/(|S'|+2(1-p))}{\tk/|S'|} = t \cdot \frac{|S'|}{|S'| + 2(1-p)} \leq t.
\]
Therefore, we can bound the probability as
\begin{align*}
\Pr[|S \cap C_v| \in \{\tk-1, \tk\} \mid \{\ell_1, \ell_2\} \subseteq S]
  & \le
  \left(\left(1 - \frac{(3t/4-1)\tk/|S'|}{1-\tk/|S'|}\right)^{1-\tk/|S'|}
  t^{\tk/|S'|}\right)^{|S'|}\\
  &\leq
  \exp\left(-(3t/4-1-\log t)\tk\right) \leq \exp\left(-t\tk/2\right),
\end{align*}
for $t \geq \beta/p$ large enough.
We next observe that 
\[
\binom{|L|-2}{p|L|-2}/\binom{|L|}{p|L|}
= \frac{(|L|-2)!}{(p|L|-2)! (|L|-p|L|)!}\cdot
\frac{(p|L|)! (|L|-p|L|)!}{|L|!} =
\frac{p|L|}{|L|}\cdot \frac{p|L|-1}{|L|-1}
\leq p^2
\]
in order to conclude that 
\begin{multline}\label{equ:probV}
  \Pr[v \in V_{\tk}(S)] 
  =\left(\binom{|L|-2}{p|L|-2}/\binom{|L|}{p|L|}\right)\cdot
  \,\Pr\left[|S \cap C_v| \in \{\tk-1, \tk\} 
    \mid \{\ell_1, \ell_2\} \subseteq S \right]\\
  \leq 
p^2 \cdot \exp\left(-t\tk/2\right) .
\end{multline}
Now we can finally bound the expectation as follows:
\begin{align*}
  &\EX\left[
    \sum_{\substack{v \in V_{\tk}(S) \\ c_v \geq \beta \tk/p^2}} 
    c_v (\log c_v + \log^2 \tk + \log{(1/p)})
    \right]\\
  &= 
  \sum_{v \in \lev_{\geq \beta \tk/p^2}(L)} \Pr[ v \in V_{\tk}(S)]\, 
  c_v (\log c_v + \log^2 {\tk} + \log{(1/p)} )\\
\intertext{(grouping by level, using (\ref{equ:probV}),
and letting $l_c$ denote the number of vertices in $\lev_c(L)$)}
  &\leq 
  \sum_{c = \beta \tk/p^2}^{|L|} l_c\, p^2 
  e^{-cp/2}
  c (\log c + \log^2 \tk + \log{(1/p)} )\\
\intertext{(bounding the sum by an integral and using
$l_c = O(|L|c^{1/3})$~\cite{Dey98})}
    &\ll 
    \int_{c = \beta\tk/p^2}^{\infty} |L|c^{1/3} p^2 e^{-cp/2}\, 
    c (\log c + \log^2{\tk}  + \log{(1/p)} ) \,\mathrm{d}c\\
\intertext{(substituting $c = t\tk /p$ and using 
$\mathrm{d}c = (\tk/p)\mathrm{d}t$)}
&=
\int_{t = \beta/p}^{\infty} |L|(t\tk/p)^{1/3} p^2 e^{-t\tk/2}\, (t\tk/p) 
(\log (t/p) + \log^2{\tk} + \log{(1/p)}) (\tk/p)\,\mathrm{d}t \\
\intertext{(collecting the terms and simplifying) }
  &\ll |L|(\tk^3/p)   \int_{t = \beta/p}^{\infty} t^2 e^{-t\tk/2} \,\mathrm{d}t\\
\intertext{(solving the integral)}
  &= |L| (\tk^3/p)  (2\cdot (\beta/p)^2 \tk^{-1} + 8 (\beta/p) \tk^{-2} + 
  16 \tk^{-3}) e^{-\beta \tk/2p}\\
\intertext{(simplifying)}
  &\ll |L| e^{-\Theta(\beta)(\tk/p)},
\end{align*}
for $\beta$ large enough, as desired.
\end{proof}

\end{document}